\newcommand*\diff{\mathop{}\!\mathrm{d}}
\newtheorem{theorem}{Theorem}
\newtheorem{lem}{Lemma}[section]
\newtheorem{cor}[theorem]{Corollary}
\def\BEN{\begin{enumerate}}  \def\BI{\begin{itemize}}
\def\EEN{\end{enumerate}}   \def\EI{\end{itemize}}
\newtheorem{Lemma}{Lemma}
\newtheorem{Prop}{Proposition}
\title{Pricing insurance drawdown-type contracts \\with underlying L\'evy assets}
\author{Zbigniew Palmowski}
\address{Faculty of Pure and Applied Mathematics, Wroc\l aw University of Science and Technology, ul. Wyb. Wyspia\'nskiego 27, 50-370 Wroc\l aw, Poland}
\email{zbigniew.palmowski@gmail.com}
\author{Joanna Tumilewicz}
\address{Mathematical Institute, University of Wroc\l aw, pl. Grunwaldzki 2/4, 50-384 Wroc\l aw, Poland}
\email{joanna.tumilewicz@gmail.com}
\thanks{This work is partially supported by National Science Centre Grants No. 2015/17/B/ST1/01102
(2016-2019) and No. 2016/23/N/ST1/01189
(2017-2019).}
\date{\today}
\subjclass [JEL]{C61, G01, G13, G22} %
\keywords{}
\begin{document}

\begin{abstract}
In this paper we consider some insurance policies related to drawdown and drawup events
of log-returns for an underlying asset modeled by a spectrally negative geometric L\'evy process.
We consider four contracts, three of which were introduced in \cite{olympia} for a geometric Brownian motion.
The first one is an insurance contract where the protection buyer pays a constant premium
until the drawdown of fixed size of log-returns occurs.
In return he/she receives a certain insured amount at the drawdown epoch.
The next insurance contract provides protection from any specified drawdown with a drawup contingency.
This contract expires early if a certain fixed drawup event occurs prior to the fixed drawdown.
The last two contracts are extensions of the previous ones by an additional cancellation feature which allows
the investor to terminate the contract earlier.
We focus on two problems: calculating the fair premium $p$ for the basic contracts and identifying
the optimal stopping rule for the policies with the cancellation feature.
To do this we solve some two-sided exit problems related to  drawdown and drawup of spectrally negative L\'evy processes,
which is of independent mathematical interest.
We also heavily rely on the theory of optimal stopping.
\vspace{3mm}

\noindent {\sc Keywords.}  insurance contract $\star$ fair valuation $\star$ drawdown $\star$ drawup $\star$ L\'evy process $\star$ optimal stopping.

\end{abstract}

\maketitle

\pagestyle{myheadings} \markboth{\sc Z.\ Palmowski
--- J.\ Tumilewicz} {\sc Pricing insurance L\'{e}vy-drawdown-type contracts}

\vspace{1.8cm}

\tableofcontents

\newpage

\section{Introduction}

The drawdown of a given process is the distance of
the current value away from the maximum value it has attained to date.
Similarly, the drawup is defined as the current
rise in the value over the running minimum.
Both have been customarily used as dynamic risk measures.
In fact,
the drawdown process not only provides dynamic measure of risk, but
it can also be viewed as giving measure of relative regret. Similarly the drawup process
can be viewed as providing measure of relative satisfaction.
Thus, a drawdown or
a drawup may signal the time when the investor may choose to change his/her investment position, which depends on
his/her perception of future moves of the market and his/her risk aversion.

The interest in the drawdown process has been strongly raised
by the recent financial crisis.
A large market drawdown may bring portfolio losses, liquidity shocks and even future recessions.
Therefore risk management of drawdown has become so important among practitioners; see
e.g. \cite{GZ} for portfolio optimization
under constraints on the drawdown process,
\cite{CZH, MA} for the distribution of the maximum drawdown of drifted Brownian
motion and the time-adjusted measure of
performance known as the Calmar ratio, and
\cite{PV, Vec1, Vec2} for the drawdown process as a dynamic measure of risk.
For an overview of the existing techniques for analysis of market crashes as well as
a collection of empirical studies of the drawdown process and the maximum
drawdown process, see Sornette \cite{Sorn}.

It is thus natural that fund managers have a strong incentive to seek insurance against
drawdown. In fact, as papers \cite{CZH, Vec1, Vec2} argue, some market-traded contracts, such as vanilla and look-back puts,
have only limited ability to insure against market drawdown. Therefore
the drawdown protection can be useful also for individual investors.

In this paper we follow Zhang et al. \cite{olympia} in pricing
some insurance contracts against drawdown (and drawup) of
log-returns of stock price modeled by an exponential L\'evy process
and identifying the optimal stopping rules.
We also identify for these contracts the so-called fair premium rates
for which the contract prices equal zero.

In its simplest form, the first drawdown
insurance contract involves a continuous premium payment by the investor
(protection buyer) to insure against a drawdown of log-returns of the underlying asset over a pre-specified level.
A possible buyer of this contract might think that a large drawdown is unlikely and he/she
might want to stop paying the premium.
Therefore we expand the simplest contract by adding a cancellation feature.
In this case, the investor receives the right to terminate the contract earlier and
in that case he/she pays a penalty for doing so.
We show that the investor's optimal cancellation time
is based on the first passage time of the drawdown of the log-return process.

Moreover, we also consider a related contract that protects
the investor from a drawdown of log-return of the asset price
preceding a drawup related to it.
In other words, the contract expires early if a drawup occurs
prior to a drawdown.
From the investor's perspective, when
a drawup occurs, there is little need to insure against a drawdown.
Therefore, this drawup contingency automatically stops the
premium payment, and it is an attractive feature that could potentially
reduce the cost of the drawdown insurance.
Finally, we also add a cancellation feature to this contract.

Zhang et al. \cite{olympia} only considered a risky asset modeled by the geometric Brownian motion.
However, in recent years, the empirical study of financial
data reveals that the distribution of the log-return of stock price exhibits
features which cannot be captured by the normal distribution such as heavy tails and
asymmetry.
With a view to replicating these features more effectively
and to reproducing a wide variety of implied volatility skews and smiles, there has
been a general shift in the literature to modeling a risky asset with an exponential L\'evy process rather than the exponential of a linear Brownian motion; see Kyprianou \cite{KIntr} and {\O}ksendal and Sulem \cite{OS} for overviews.
Therefore looking for a better fitting of the evolution of the stock
price process to real data, in this paper we price derivative securities in the
market by a general geometric spectrally negative
L\'evy process.
That is, the logarithm of the price of a risky asset in our case will be a process with stationary and independent increments
with no positive jumps.

The last contract analyzed in this paper taking into account drawdown and drawup with a cancellation feature is
considered for the first time in the literature. Although it is the most complex, it
produces very interesting and surprising results. In particular, we discover a new phenomenon for the
optimal stopping rule in this contract.
In the phenomenon, the investor's stopping rule is also at a first passage time of the drawdown of the log-return process,
similarly to the second contract without a drawup contingency. Still, the level of termination is different,
taking into account the drawup event.

Our approach is based on the classical fluctuation theory for  spectrally negative L\'evy processes
(related to so-called scale functions)
and some new exit identities for reflected L\'evy processes.
These new formulas identify two-sided exit problems for drawup and drawdown first passage times.
A key element of our approach is path analysis and the use of some results of
Mijatovi\'{c} and Pistorius \cite{Mijatovic1}.
We also heavily use optimal stopping theory.
In a market where the underlying dynamics for the stock price process is driven by the
exponential of a linear Brownian motion the valuation is transformed
into a free boundary problem.
However, if we allow jumps to appear in the sample paths of the dynamics of the stock price process,
this idea  breaks down.
To tackle these infinite
horizon problems we use the so-called ``guess and verify'' method.
For this method, one guesses
what the optimal value function and optimal stopping should be, and then tries to verify that candidate solution is
indeed the optimal one by testing it by means of a verification theorem.
This means that the value function identified by the guessed stopping rule
applied to the log-return price process
constructs a smallest, in some sense, discounted supermartingale.

In this paper we also analyze many particular examples and make an extensive numerical analysis
showing the dependence of the contract and stopping time on the model's parameters.
We mainly focus on the case when the logarithm of the asset price is a linear Brownian motion
or drift minus a compound Poisson process (a Cram\'er-Lundberg risk process).

The paper is organized as follows.
In Section \ref{sec:prel} we introduce the main definitions, notation, and the main fluctuation identities.
We analyze insurance contracts based on drawdown and additional drawup
in Sections \ref{sec:drawdown} and \ref{sec:drawup}, respectively. We finish by the numerical analysis in Section \ref{sec:examples} and Conclusions in Section \ref{sec:con}.

\section{Preliminaries}\label{sec:prel}
We work on a complete filtered probability space $(\Omega,\mathcal{F},\mathbb{P})$ satisfying the usual conditions. We model the logarithm of the underlying risky
asset price $\log S_t$ by
a spectrally negative L\'evy process $X_t$, that is, $S_t=\exp\{X_t\}$ is a geometric L\'evy process.
This means that $X_t$ is a stationary stochastic process with independent increments, right-continuous paths with left limits, and has only negative jumps.

Many identities will be given in terms of so-called scale functions which are defined in the following way.
We start by defining the so-called Laplace exponent of $X_t$:
\begin{equation}\label{psi}
\psi (\phi )=\log\mathbb{E}[e^{\phi X_1}],
\end{equation}
which is well defined for $\phi\geq 0$ due to the absence of positive jumps.
Recall that by the L\'evy-Khintchine theorem,
\begin{equation}\label{eq:exponent}
\psi(\phi)=\mu\phi
+\frac{1}{2}\sigma^{2}\phi^{2}+\int_{(0,\infty)}\big(\mathrm e^{-\phi
u}-1+\phi u\mathbbm{1}_{(u<1)}\big)\Pi(\diff u),
\end{equation}
which is analytic for $\mathfrak{Im}(\phi)\leq 0$, where $\mu$ and $\sigma\geq 0$ are real
and $\Pi$ is a so-called L\'evy measure such that $\int\left(1\wedge x^2\right)\Pi(\diff x)<\infty$.
It is easy to observe that $\psi$ is
zero at the origin, tends to infinity at infinity and is strictly
convex. We denote by $\Phi:[0,\infty)\rightarrow [0,\infty)$ the
right continuous inverse of $\psi$ so that
\begin{equation*}
\Phi(r)=\sup\{\phi>0:\psi(\phi)=r\} \quad \textrm{and} \quad
\psi(\Phi(r))=r \quad \text{for all} \ r \geq 0.
\end{equation*}
For $r\geq 0$ we define a continuous and strictly increasing function $W^{(r)}$ on $[0,\infty )$  with Laplace transform given by
\begin{align}
\int_0^\infty e^{-\phi u}W^{(r)}(u)\diff u=\frac{1}{\psi (\phi )-r}\quad \textrm{for}\quad \phi>\Phi(r),\label{Wq}
\end{align}
where $\psi$ is the Laplace exponent of $X_t$ given in \eqref{psi}.
$W^{(r)}$ is called the first scale function.
The second scale function is related to the first via:
\begin{align}
Z^{(r)}(u)=1+r\int_0^u W^{(r)}(\phi)\diff\phi. \label{Zq}
\end{align}
In this paper we will assume that
\begin{equation}\label{scalec1}
W^{(r)}\in \mathcal{C}^1(\mathbb{R}_{+})
\end{equation}
for $\mathbb{R}_{+}=[0,\infty)$.
This assumption is satisfied when the process $X_t$ has a non-trivial Gaussian component, or it is of unbounded variation, or the jumps have a density; see
\cite[Lem. 2.4]{kyprianoua}.
The scale functions are used in two-sided exit formulas:
\begin{align}
&\mathbb{E}_x\left[e^{-r\tau^+_a};\tau^+_a<\tau^-_0\right]=\frac{W^{(r)}(x)}{W^{(r)}(a)},\label{twosided1}\\
&\mathbb{E}_x \left[e^{-r\tau^-_0};\tau^-_0<\tau^+_a\right]=Z^{(r)}(x)-Z^{(r)}(a)\frac{W^{(r)}(x)}{W^{(r)}(a)},\label{twosided2}
\end{align}
where $x\leq a$, $r\geq 0$ and
\begin{align}
\tau^+_a=\inf\{t\geq 0 : X_t\geq a\},\qquad
\tau^-_a=\inf\{t\geq 0 : X_t\leq a\}
\end{align}
are the first passage times and we set $\inf\emptyset=\infty$. We  use the notation
$\mathbb{E}[\cdot\; \mathbbm{1}_{\{A\}}]=\mathbb{E}[\cdot; A]$ for any event $A$.

Set:
\begin{equation*}\label{supinf}
\overline{X}_t=\sup_{s\leq t} X_s,\qquad \underline{X}_t=\inf_{s\leq t} X_s.
\end{equation*}
In this paper, we analyze some insurance contracts related to the
drawdown and drawup processes of the log-return of the asset price $S_t$, that is,
to the drawdown and drawup processes of $X_t$.
The drawdown and drawup processes are Markov process and are defined as follows.
The drawdown is the difference between the running maximum of the underlying process and its current value and
the drawup is the difference between the current value and the running minimum.
Here, we additionally assume that the drawdown and drawup processes start from some points $y>0$ and $z>0$, respectively.
That is,
\begin{align}
D_t=\overline{X}_t\vee y-X_t,\qquad U_t=X_t-\underline{X}_t\wedge (-z).
\end{align}
Here $y$ and $-z$ can be interpreted as the historical maximum and the historical minimum of $X$.
Crucial for further work are the following first passage times of the drawdown process and the drawup process:
\begin{align}
\tau_D^+(a)=\inf\{t\geq 0 : D_t\geq a\},\qquad
\tau_D^-(a)=\inf\{t\geq 0 : D_t\leq a\},\label{tauDm}
\end{align}
\begin{align}
\tau_U^+(a)=\inf\{t\geq 0 : U_t\geq a\},\qquad
\tau_U^-(a)=\inf\{t\geq 0 : U_t\leq a\}.\label{tauU}
\end{align}
Later, we will use the following notational convention:
\begin{align}
\mathbb{P}_{|y}\left[\cdot\right]:=\mathbb{P}\left[\cdot|D_0=y\right],\nonumber\quad
\mathbb{P}_{|y|z}\left[\cdot\right]:=\mathbb{P}\left[\cdot|D_0=y,U_0=z\right],\nonumber\quad
\mathbb{P}_{x|y|z}\left[\cdot\right]:=\mathbb{P}\left[\cdot|X_0=x,D_0=y,U_0=z\right].\nonumber
\end{align}
Finally, we denote $\mathbb{P}_{x}\left[\cdot\right]:=\mathbb{P}\left[\cdot|X_0=x\right]$
with $\mathbb{P}=\mathbb{P}_{0}$, and $\mathbb{E}_{|y}, \mathbb{E}_{|y|z}, \mathbb{E}_{x|y|z}, \mathbb{E}_{x}, \mathbb{E}$
are the corresponding expectations.

We finish this section with two main formulas (the fist one is given in Mijatovi\'{c} and Pistorius \cite[Thm. 4]{Mijatovic1}
and the second follows from \eqref{twosided1})
that identify the joint laws of $\{\tau_U^+, \overline{X}_{\tau_U^+}$, $\underline{X}_{\tau^+_U}\}$
and $\{\tau_D^-(\theta)$, $\underline{X}_{\tau_D^-(\theta)}\}$:
\begin{align}
\mathbb{E}\left[e^{-r\tau^+_U(b)+u\underline{X}_{\tau^+_U(b)}};\overline{X}_{\tau^+_U(b)}<v\right]=&e^{ub}\frac{1+(r-\psi (u))\int_0^{b-v}e^{-uy}W^{(r)}(y)dy}{1+(r-\psi (u))\int_0^{b}e^{-uy}W^{(r)}(y)\diff y}\nonumber\\
&-e^{-u(b-v)}\frac{W^{(r)}(b-v)}{W^{(r)}(b)},\label{m2}
\end{align}
\begin{align}
\mathbb{E}_{|y}\left[e^{-r\tau_D^-(\theta)};\underline{X}_{\tau_D^-(\theta)}>-x\right]=\mathbb{E}_{x}\left[e^{-r\tau^+_{y-\theta+x}};\tau^+_{y-\theta+x}<\tau^-_0\right]=\frac{W^{(r)}(x)}{W^{(r)}(y-\theta+x)}.\label{distXD-}
\end{align}

\section{Drawdown insurance contract}\label{sec:drawdown}
\subsection{Fair premium}
In this section, we consider an insurance contract in which
the protection buyer pays a constant premium $p\geq 0$ continuously until a drawdown of log-returns of the asset price
of size $a>0$ occurs.
In return she/he receives the insured amount $\alpha\geq 0$ at the drawdown epoch.
Let $r\geq 0$ be the risk-free interest rate. The contract price is equal to the discounted value of the future cash-flows:
\begin{align}\label{f}
f(y,p) = \mathbb{E}_{|y}\left[ -\int_0^{\tau_D^+(a)}e^{-rt}p \diff t + \alpha e^{-r\tau_D^+(a)}\right].
\end{align}
Note that in this contract the investor wants to protect herself/himself from the asset price $S_t=e^{X_t}$ falling down from
the previous maximum by more than a fixed level $e^a$ for some $a>0$.
In other words, she/he believes that even if the price goes up again after the first drawdown of size $e^a$,
this will not bring her/him sufficient profit.
Therefore, she/he is ready to enter into this type of contract to reduce loss by getting $\alpha>0$
at the drawdown epoch.

Note that
\begin{equation}\label{dodane1}
f(y,p)= \left ( \frac{p}{r}+\alpha \right ) \xi (y) -\frac{p}{r},
\end{equation}
where
\begin{equation}\label{xi}
\xi(y) := \mathbb{E}_{|y}\left[ e^{-r\tau_D^+(a)}\right]
\end{equation}
is the conditional Laplace transform of $\tau_D^+ (a)$ given that $D_0=y\in (0,a)$.
To price the contract \eqref{f} we start by identifying
the crucial function $\xi$.

\begin{Prop}\label{Ksi}
The conditional Laplace transform $\xi (\cdot)$ is given by
\begin{align}\label{ksi}
\xi (y) = Z^{(r)}(a-y)-rW^{(r)}(a-y)\frac{W^{(r)}(a)}{W^{\prime (r)}(a)}.
\end{align}
\end{Prop}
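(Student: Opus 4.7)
The plan is to reduce the drawdown first-passage problem to a standard two-sided exit problem for the underlying L\'evy process $X$ via the strong Markov property. Starting with $X_0=0$ and $\overline{X}_0=y$, so that $D_0=y\in(0,a)$, I would split according to whether $X$ first updates its historical maximum (by reaching level $y$) or first produces a drawdown of size exactly $a$ from that maximum (by reaching level $y-a$). On the event $\{\tau^-_{y-a}<\tau^+_y\}$ no new maximum is attained before the drawdown, so $\tau_D^+(a)=\tau^-_{y-a}$; on $\{\tau^+_y<\tau^-_{y-a}\}$ the running maximum is updated to $y$ at time $\tau^+_y$, and by strong Markov the problem restarts from drawdown $0$.

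This decomposition yields
\begin{align*}
\xi(y)=\mathbb{E}\!\left[e^{-r\tau^-_{y-a}};\tau^-_{y-a}<\tau^+_y\right]+\mathbb{E}\!\left[e^{-r\tau^+_y};\tau^+_y<\tau^-_{y-a}\right]\xi(0).
\end{align*}
The two expectations follow from \eqref{twosided1}--\eqref{twosided2} after a spatial shift by $a-y$ (the barriers become $0$ and $a$, with starting point $a-y$), producing
\begin{align*}
\xi(y)=Z^{(r)}(a-y)-Z^{(r)}(a)\,\frac{W^{(r)}(a-y)}{W^{(r)}(a)}+\frac{W^{(r)}(a-y)}{W^{(r)}(a)}\,\xi(0).
\end{align*}

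The main obstacle is pinning down $\xi(0)$, since evaluating the above identity at $y=0$ collapses to a tautology. My plan is to exploit the fact that $D$ is reflected at $0$: under assumption \eqref{scalec1} the map $y\mapsto\xi(y)$ is $C^1$ on $[0,a)$, and the standard theory of L\'evy processes reflected at their supremum supplies the Neumann-type boundary condition $\xi'(0^+)=0$ at the reflection barrier. Differentiating the displayed formula in $y$, using $Z^{(r)\prime}(u)=rW^{(r)}(u)$ from \eqref{Zq}, and setting $y=0$ gives
\begin{align*}
\xi(0)=Z^{(r)}(a)-rW^{(r)}(a)\,\frac{W^{(r)}(a)}{W^{\prime (r)}(a)}.
\end{align*}
Substituting this back, the two occurrences of $Z^{(r)}(a)$ cancel and one reads off exactly \eqref{ksi}. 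The delicate ingredient is justifying the Neumann condition; I would do this either via the excursion-theoretic description of $\overline{X}-X$ off the supremum (the local-time push at $0$ leaves $\xi$ unchanged precisely when $\xi'(0^+)=0$), or, as an alternative, by directly invoking the known closed form for the Laplace transform of the first drawdown passage time of spectrally negative L\'evy processes, which is available in the fluctuation literature and can be extracted from the joint-law identity of Theorem 4 in \cite{Mijatovic1}.
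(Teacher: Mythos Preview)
Your decomposition is identical to the paper's: both apply the strong Markov property at the first time the drawdown hits $0$ (equivalently, $X$ first attains its historical maximum), translate to the two-sided exit problem for $X$ on $[0,a]$ started at $a-y$, and arrive at
\[
\xi(y)=Z^{(r)}(a-y)-Z^{(r)}(a)\,\frac{W^{(r)}(a-y)}{W^{(r)}(a)}+\frac{W^{(r)}(a-y)}{W^{(r)}(a)}\,\xi(0).
\]
The only divergence is how $\xi(0)$ is pinned down. The paper simply quotes the closed form $\xi(0)=Z^{(r)}(a)-rW^{(r)}(a)\,W^{(r)}(a)/W^{\prime(r)}(a)$ from Pistorius~\cite[Prop.~2(ii)]{Pistorius}, whereas you derive it by imposing the Neumann condition $\xi'(0^+)=0$ and solving. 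Your computation is correct (indeed the final formula satisfies $\xi'(0)=0$), and the approach is more self-contained in spirit; the trade-off is that you must justify the boundary condition, which is not entirely free---it rests on the excursion-theoretic fact that the reflected process spends zero Lebesgue time at $0$ and that the local-time push does not contribute to the discounted functional, or equivalently on smooth-fit for the resolvent of the reflected process under assumption~\eqref{scalec1}. The paper's citation sidesteps this entirely. Your fallback of invoking the literature (you mention \cite{Mijatovic1}) is effectively what the paper does, just with a different reference.
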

\begin{proof}
Note that $\tau_D^-(0)$ is the first time that the drawdown process $D_t$ passes level $0$,
which means that the process $X_t$ attains its historical maximum.
This occurs in a continuous way by assumed spectral negativity of the L\'evy process $X$.
By the strong Markov property of $D_t$ at $\tau_D^-(0)$ we have
\begin{align}
\xi (y)=& \mathbb{E}_{|y}\left[ e^{-r\tau_D^+(a)}\right] \nonumber \\
=& \mathbb{E}_{|y}\left[ e^{-r\tau_D^+(a)};\tau_D^+(a)<\tau_D^-(0)\right] + \mathbb{E}_{|y}\left[ e^{-r\tau_D^-(0)};\tau_D^-(0)<\tau_D^+(a)\right]\xi (0) \nonumber \\
=&\mathbb{E}_{a-y}\left[ e^{-r\tau^-_0};\tau^-_0<\tau^+_a\right]+ \mathbb{E}_{a-y}\left[ e^{-r\tau^+_a};\tau^+_a<\tau^-_0\right]\xi (0)\nonumber\\
=& Z^{(r)}(a-y) - Z^{(r)}(a)\frac{W^{(r)}(a-y)}{W^{(r)}(a)}+\frac{W^{(r)}(a-y)}{W^{(r)}(a)}\xi (0), \label{ksiy}
\end{align}
where the third equality follows from the two-sided exit formulas \eqref{twosided1} - \eqref{twosided2}.
Therefore the problem of finding $\xi$ is reduced to identifying $\xi (0)$.
The latter can be obtained from \cite[Prop. 2(ii), p. 191]{Pistorius}:
\begin{align}\label{ksi0}
\xi (0) =  Z^{(r)}(a) - rW^{(r)}(a)\frac{W^{(r)}(a)}{W^{\prime (r)}(a)}.
\end{align}
This completes the proof.
\end{proof}
Thus we have the following theorem.
\begin{theorem}
The value of the contract \eqref{f} is given in \eqref{dodane1} for $\xi$ given in \eqref{ksi}.
\end{theorem}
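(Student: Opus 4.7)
The plan is to observe that the theorem is essentially a bookkeeping consequence of Proposition \ref{Ksi} together with the definition \eqref{f} of $f(y,p)$, so no new fluctuation identities are needed. Indeed, equation \eqref{dodane1} already expresses $f$ in terms of the Laplace transform $\xi$, and Proposition \ref{Ksi} already gives the explicit formula \eqref{ksi} for $\xi$. The only thing left to verify is the reduction of $f(y,p)$ to the form \eqref{dodane1}.

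To carry out this reduction I would first use the deterministic pathwise identity
\begin{equation*}
\int_0^{\tau_D^+(a)} e^{-rt}\,\diff t = \frac{1-e^{-r\tau_D^+(a)}}{r}
\end{equation*}
(valid on $\{\tau_D^+(a)<\infty\}$, and extended to the whole sample space by noting that $e^{-r\tau_D^+(a)}=0$ on the complement for $r>0$). Taking $\mathbb{E}_{|y}[\cdot]$ of \eqref{f} and using linearity together with the definition $\xi(y)=\mathbb{E}_{|y}[e^{-r\tau_D^+(a)}]$ from \eqref{xi}, the integral contribution becomes $-\frac{p}{r}\bigl(1-\xi(y)\bigr)$ and the terminal payoff contributes $\alpha\,\xi(y)$. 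Summing and collecting the coefficient of $\xi(y)$ immediately yields \eqref{dodane1}. Substituting the formula \eqref{ksi} from Proposition \ref{Ksi} then gives the explicit closed form of $f(y,p)$.

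There is essentially no obstacle here: all the analytic work is absorbed into Proposition \ref{Ksi}, whose proof in turn reduces, via the strong Markov property at $\tau_D^-(0)$, to the classical two-sided scale-function identities \eqref{twosided1}--\eqref{twosided2} and the creeping-upwards identity \eqref{ksi0} for $\xi(0)$ taken from Pistorius. If one wanted to be pedantic, the only subtlety is to handle the degenerate case $r=0$ by a limiting argument (interpreting $(1-e^{-r\tau})/r$ as $\tau$ and requiring $\mathbb{E}_{|y}[\tau_D^+(a)]<\infty$), but the statement is framed for $r\ge 0$ with the understanding that the formulas are read by continuity at $r=0$.
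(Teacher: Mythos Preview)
Your proposal is correct and matches the paper's approach exactly: the theorem is stated there as an immediate corollary (``Thus we have the following theorem'') of the already-noted identity \eqref{dodane1} combined with Proposition~\ref{Ksi}. Your explicit verification of \eqref{dodane1} via the integral $\int_0^{\tau_D^+(a)} e^{-rt}\,\diff t = (1-e^{-r\tau_D^+(a)})/r$ simply spells out what the paper's ``Note that'' leaves implicit.
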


It is a fair situation for both sides, the insurance company and the investor, when the contract price at the conclusion time equals $0$.
Therefore, we say that the premium $p^*$ is {\it fair} when
$$f(y,p^*)=0.$$
From \eqref{dodane1} using Proposition \ref{Ksi} we derive the following theorem.
\begin{theorem}
For the contract \eqref{f} the fair premium equals
\begin{equation}
p^* = \frac{r\alpha\xi (y)}{1-\xi (y)}.\label{p*}
\end{equation}
\end{theorem}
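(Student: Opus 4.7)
The plan is to derive $p^*$ directly from the expression \eqref{dodane1} by solving the linear equation $f(y,p^*) = 0$. I would not need any new probabilistic input beyond Proposition \ref{Ksi}; the entire argument is algebraic once the representation $f(y,p) = \bigl(\tfrac{p}{r}+\alpha\bigr)\xi(y) - \tfrac{p}{r}$ is in hand.

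First I would substitute the fair-premium condition into \eqref{dodane1} to obtain
\[
\Bigl(\tfrac{p^*}{r}+\alpha\Bigr)\xi(y) - \tfrac{p^*}{r} = 0,
\]
which I would rearrange as $\alpha\xi(y) = \tfrac{p^*}{r}(1-\xi(y))$ and solve for $p^*$, yielding \eqref{p*}. Before dividing by $1-\xi(y)$, I would briefly justify that this quantity is nonzero: since $X$ is spectrally negative and starts with $D_0 = y \in (0,a)$, the drawdown process can cross level $a$ only by a negative jump of $X$ or through fluctuations that take strictly positive time, so $\tau_D^+(a) > 0$ almost surely and $\P_{|y}(\tau_D^+(a) = 0) = 0$. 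For $r > 0$ this immediately gives $\xi(y) = \E_{|y}[e^{-r\tau_D^+(a)}] < 1$, so the formula is well defined.

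The only mildly delicate point, and the one I would flag rather than call an obstacle, is the edge case $r = 0$. In that degenerate situation $\xi(y) \equiv 1$ and \eqref{dodane1} collapses to $f(y,p) = \alpha - \infty \cdot 0$ in an indeterminate way; the natural reading is that the fair premium is obtained as a limit, or equivalently one works directly from $f(y,p) = \alpha - p\,\E_{|y}[\tau_D^+(a)]$ to get $p^* = \alpha/\E_{|y}[\tau_D^+(a)]$. Since the statement of the theorem implicitly assumes $r > 0$ (otherwise $\xi$ via \eqref{ksi} would not feature the discount), I would simply note this and present \eqref{p*} as the fair premium in the discounted regime, completing the proof.
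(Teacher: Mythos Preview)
Your proposal is correct and follows exactly the paper's approach: the paper simply states that the theorem is derived from \eqref{dodane1} using Proposition~\ref{Ksi}, which is precisely the algebraic manipulation you carry out. Your additional remarks on why $\xi(y)<1$ and the $r=0$ edge case are extra rigor the paper omits, but the core argument is identical.
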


\subsection{Cancellation feature}
We now extend the previous contract by adding a possibility of cancellation.
In other words, we give the investor the right to terminate the contract by paying a fixed fee $c\geq 0$ at
any time prior to a pre-specified drawdown of log-return of the asset price of size $a>0$.
This contract is addressed to those investors who are not willing to pay the premium any longer after they stop
to believe that a large drawdown of the asset price may happen.
The contract value then equals
\begin{align}
F(y,p) = \sup\limits_{\tau\in\mathcal{T}}\mathbb{E}_{|y}\Bigg[-\int_0^{\tau_D^+(a)\wedge\tau}e^{-rt}p \diff t - ce^{-r\tau}\mathbbm{1}_{(\tau<\tau_D^+(a))}+ \alpha e^{-r\tau_D^+(a)}\mathbbm{1}_{(\tau_D^+(a)\leq\tau)}\Bigg],\label{F}
\end{align}
where $\mathcal{T}$ is the family of all $\mathcal{F}_t$-stopping times.

One of the main goals of this paper is to identify the optimal stopping rule $\tau^*$ that realizes the price $F(y,p)$.
We start from a simple observation.
\begin{Prop}\label{Prop_decomp}
The cancellable drawdown insurance value admits the following decomposition:
\begin{align}\label{Fdecomposition}
F(y,p)=f(y,p)+G(y,p),
\end{align}
where
\begin{align}
&G(y,p):=\sup\limits_{\tau\in\mathcal{T}}g_\tau(y,p),\label{gtau}\\
&g_\tau(y,p):=\mathbb{E}_{|y}\left[e^{-r\tau}\tilde{f}(D_{\tau},p); \tau<\tau_D^+(a)\right],\label{gtaub}\\
&\tilde{f}(y,p):=-f(y,p)-c\label{numer}
\end{align}
and $f(\cdot,\cdot)$ is defined in (\ref{f}).
\end{Prop}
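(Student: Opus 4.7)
The plan is to show that for every admissible stopping time $\tau$, the integrand inside the supremum defining $F(y,p)$ equals $f(y,p)$ plus the integrand defining $g_\tau(y,p)$, and then take the supremum over $\tau$. Since $f(y,p)$ does not depend on $\tau$, it pulls out of the supremum, yielding \eqref{Fdecomposition}.

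More concretely, I would fix $\tau\in\mathcal T$ and rewrite the payoff in \eqref{F} on the complementary events $\{\tau<\tau_D^+(a)\}$ and $\{\tau\ge\tau_D^+(a)\}$. On $\{\tau\ge\tau_D^+(a)\}$ the stopping feature is vacuous and the payoff reduces to the non-cancellable one appearing inside $f(y,p)$. On $\{\tau<\tau_D^+(a)\}$ the payoff is $-\int_0^{\tau} e^{-rt}p\diff t - c e^{-r\tau}$. The idea is then to add and subtract the full non-cancellable payoff
\[
B := -\int_0^{\tau_D^+(a)} e^{-rt}p\diff t + \alpha e^{-r\tau_D^+(a)},
\]
whose expectation is $f(y,p)$ regardless of $\tau$. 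A straightforward computation on each event shows
\[
\text{(payoff in \eqref{F})} - B \;=\; \mathbbm{1}_{\{\tau<\tau_D^+(a)\}}\left[\int_\tau^{\tau_D^+(a)} e^{-rt}p\diff t - c e^{-r\tau} - \alpha e^{-r\tau_D^+(a)}\right].
\]

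Next I would invoke the strong Markov property of the drawdown process $D_t$ at time $\tau$. Because $D$ is Markov and $\tau_D^+(a)$ is a hitting time of the level $a$ by $D$, on the event $\{\tau<\tau_D^+(a)\}$ we have $D_\tau\in[0,a)$ and the post-$\tau$ contribution
\[
\mathbb{E}_{|y}\left[\mathbbm{1}_{\{\tau<\tau_D^+(a)\}}\,e^{-r\tau}\,\mathbb{E}_{|D_\tau}\!\left[\int_0^{\tau_D^+(a)} e^{-ru}p\diff u - \alpha e^{-r\tau_D^+(a)}\right]\right]
\]
equals $\mathbb{E}_{|y}[\mathbbm{1}_{\{\tau<\tau_D^+(a)\}}e^{-r\tau}(-f(D_\tau,p))]$ by definition \eqref{f}. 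Combining with the $-c e^{-r\tau}$ term produces exactly $\mathbb{E}_{|y}[\mathbbm{1}_{\{\tau<\tau_D^+(a)\}}e^{-r\tau}\tilde f(D_\tau,p)] = g_\tau(y,p)$ via \eqref{gtaub}--\eqref{numer}. Taking the supremum over $\tau\in\mathcal T$ then gives the desired identity.

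The only mildly delicate point is justifying the strong Markov step: one must check that the pre-$\tau$ integral can be split off and that the remaining piece $\int_\tau^{\tau_D^+(a)} e^{-rt}p\diff t - \alpha e^{-r\tau_D^+(a)}$ is a functional of the post-$\tau$ shift of $D$, which is immediate because $\tau_D^+(a)$ depends on $D$ through its trajectory after $\tau$ started from $D_\tau$, and the Markov property of $D$ applies at the $\mathcal F_t$-stopping time $\tau$ on $\{\tau<\infty\}$. Note also that the integrand of $g_\tau$ vanishes on $\{\tau=\infty\}$, so the convention $\inf\emptyset=\infty$ used in the definition of $\tau_D^+(a)$ raises no issue. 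With this in place, no verification theorem is needed here; the proposition is a purely algebraic decomposition made legitimate by the Markov property.
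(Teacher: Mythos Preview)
Your proposal is correct and follows essentially the same approach as the paper: the paper also splits the indicator via $\mathbbm{1}_{(\tau_D^+(a)\leq\tau)}=1-\mathbbm{1}_{(\tau<\tau_D^+(a))}$, separates out the $\tau$-independent term $f(y,p)$, and then applies the strong Markov property of $D$ at $\tau$ to recognize the remaining expectation as $g_\tau(y,p)$. Your add-and-subtract of $B$ is just a slight rephrasing of the same algebra, and your remarks about the Markov step and the $\{\tau=\infty\}$ convention are accurate and harmless additions.
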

\begin{proof}
Using $\mathbbm{1}_{(\tau\geq\tau_D^+(a))}=1-\mathbbm{1}_{(\tau <\tau_D^+(a))}$ in (\ref{F}) we obtain
\begin{eqnarray}
\lefteqn{F(y,p)=\mathbb{E}_{|y}\left[ -\int_0^{\tau_D^+(a)}e^{-rt}p \diff t +\alpha e^{-r\tau_D^+(a)}\right]} \nonumber \\ &&+\sup\limits_{\tau\in\mathcal{T}}\mathbb{E}_{|y}\left[\int_{\tau\wedge\tau_D^+(a)}^{\tau_D^+(a)}e^{-rt}p \diff t - \alpha e^{-r\tau_D^+(a)}\mathbbm{1}_{\left(\tau<\tau_D^+(a)\right)} - ce^{-r\tau}\mathbbm{1}_{\left(\tau <\tau_D^+(a)\right)}\right]. \nonumber
\end{eqnarray}
Note that the first summand does not depend on $\tau$. The second depends on $\tau$ only through $\tau <\tau_D^+(a)$. Then by the strong Markov property,
\begin{eqnarray}
\lefteqn{F(y,p)=f(y,p)\nonumber}\\
&&\qquad+\sup\limits_{\tau\in\mathcal{T}}\mathbb{E}_{|y}\left[\int_{\tau}^{\tau_D^+(a)}e^{-rt}p \diff t - \alpha e^{-r\tau_D^+(a)}\mathbbm{1}_{\left(\tau<\tau_D^+(a)\right)} - ce^{-r\tau}\mathbbm{1}_{\left(\tau <\tau_D^+(a)\right)}\right] \nonumber \\
&&=f(y,p)+\sup\limits_{\tau\in\mathcal{T}}\mathbb{E}_{|y}\left[e^{-r\tau}\mathbb{E}_{|D_{\tau}}\left(\int_{0}^{\tau_D^+(a)}e^{-rt}p\diff t - \alpha e^{-r\tau_D^+(a)} - c\right);\ \tau <\tau_D^+(a)\right]. \nonumber
\end{eqnarray}
This completes the proof.
\end{proof}

Observe now that $\tilde{f}(y,p)$ in (\ref{numer}) is a decreasing function of $y$.
Thus, if $\tilde{f}(0+,p)<0$, then the optimal stopping strategy for the investor is to never terminate the contract, that is, $\tau = \infty$.
To eliminate this trivial case we assume from now on that
\begin{equation}\label{mainzalozenia}
\tilde{f}(0+,p)>0,
\end{equation}
which is equivalent to
\begin{eqnarray}
\frac{p}{r}-c>\left(\frac{p}{r}+\alpha\right)\xi(0+)\geq 0.\label{war1}
\end{eqnarray}

In order to determine the optimal cancellation strategy for our contract it is sufficient to solve the optimal stopping problem represented by the second summand in (\ref{Fdecomposition}), that is, to identify $G(y,p)$. We will use the ``guess and verify'' approach. This means that we first guess a candidate stopping rule and then verify its optimality
using the Verification Lemma below.
\begin{lem}\label{war}
Let $\Upsilon_t$ be a right-continuous process living in some Borel state space $\mathbb{B}$ and
killed at some $\mathcal{F}^\Upsilon_t$-stopping time $\tau_0$, where $\mathcal{F}^\Upsilon_t$ is a right-continuous natural filtration of $\Upsilon$.
Consider the following stopping problem:
\begin{equation}\label{BVP}
v(\phi)=\sup\limits_{\tau\in\mathcal{T}^\Upsilon}\mathbb{E}\left[e^{-r\tau}V(\Upsilon_\tau)|\Upsilon_0=\phi\right]
\end{equation}
for some function $V$ and the family $\mathcal{T}^\Upsilon$ of $\mathcal{F}^\Upsilon_t$-stopping times.
Assume that
\begin{equation}\label{verlemass}
\mathbb{P}(\lim_{t\rightarrow \infty}e^{-rt}V(\Upsilon_t)<\infty|\Upsilon_0=\phi)=1.
\end{equation}
The pair $(v^*,\tau^*)$ is a solution of the stopping problem \eqref{BVP}, that is,
$$v^*(\phi):=\mathbb{E}\left[e^{-r\tau^*}V(\Upsilon_{\tau^*})|\Upsilon_0=\phi\right],$$
if the following conditions are satisfied:
\begin{enumerate}[(i)]
\item $v^*(\phi)\geq V(\phi)$ for all $\phi\in\mathbb{B}$, \label{i}
\item the process $e^{-rt}v^*(\Upsilon_t)$ is a right continuous supermartingale. \label{ii}
\end{enumerate}
\end{lem}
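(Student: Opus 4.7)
The plan is the standard verification bracketing for optimal stopping: combining (i) and (ii) with Doob's optional sampling shows $v(\phi)\leq v^*(\phi)$, while the admissibility $\tau^*\in\mathcal{T}^\Upsilon$ delivers $v(\phi)\geq v^*(\phi)$ for free. The only nontrivial moving part will be a limit-passage argument enabled by assumption \eqref{verlemass}.

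The easy direction $v(\phi)\geq v^*(\phi)$ is immediate from the definition of the supremum: $\tau^*$ is one particular element of $\mathcal{T}^\Upsilon$, so the corresponding expected payoff, which equals $v^*(\phi)$ by the very definition of $v^*$, is bounded above by the supremum $v(\phi)$ in \eqref{BVP}.

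For the reverse direction, I fix an arbitrary $\tau\in\mathcal{T}^\Upsilon$ and truncate to the bounded stopping time $\tau\wedge n$. Condition (ii) together with right-continuity of $\Upsilon$ lets me invoke Doob's optional sampling theorem for the supermartingale $e^{-rt}v^*(\Upsilon_t)$, giving
\[
\mathbb{E}\left[e^{-r(\tau\wedge n)}v^*(\Upsilon_{\tau\wedge n})\mid\Upsilon_0=\phi\right]\leq v^*(\phi).
\]
Replacing $v^*$ by $V$ on the left using the pointwise bound in (i) preserves the inequality, and letting $n\to\infty$ with the aid of \eqref{verlemass} produces
\[
\mathbb{E}\left[e^{-r\tau}V(\Upsilon_\tau)\mid\Upsilon_0=\phi\right]\leq v^*(\phi),
\]
under the standing convention that $e^{-r\tau}V(\Upsilon_\tau)=0$ on $\{\tau=\infty\}$. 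Taking the supremum over $\tau\in\mathcal{T}^\Upsilon$ yields $v(\phi)\leq v^*(\phi)$, so combined with the easy direction one gets $v=v^*$; optimality of $\tau^*$ then follows from equality being attained at $\tau=\tau^*$ by construction.

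The main obstacle is the $n\to\infty$ passage. The supermartingale inequality is one-sided, so $e^{-r(\tau\wedge n)}V(\Upsilon_{\tau\wedge n})$ is not dominated by a fixed integrable majorant a priori. I would split the expectation on $\{\tau\leq n\}$, where right-continuity of the paths gives pointwise convergence of the integrand to $e^{-r\tau}V(\Upsilon_\tau)$ on the event $\{\tau<\infty\}$, and on $\{\tau>n\}$, where assumption \eqref{verlemass} (used through a Fatou-type argument together with the supermartingale convergence theorem applied to $e^{-rt}v^*(\Upsilon_t)$) forces $\limsup_n \mathbb{E}[e^{-rn}V(\Upsilon_n);\tau>n]\leq 0$. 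Once this tail bookkeeping is in place, the rest of the argument is formal and the claim follows.
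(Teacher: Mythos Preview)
Your argument is the standard verification bracketing and is correct; the paper itself does not spell out a proof but simply refers to \cite[Lem.~9.1]{KIntr} and \cite[Th.~2.2]{Peskir}, whose proofs proceed exactly along the lines you describe (optional sampling for the supermartingale, the domination $v^*\geq V$, then Fatou for the limit passage). In other words you have reproduced the argument the paper defers to the literature.
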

\begin{proof}
The proof follows the same arguments as the proof of \cite[Lem. 9.1, p. 240]{KIntr}; see also \cite[Th. 2.2, p. 29]{Peskir}.
\end{proof}

Using Verification Lemma \ref{war} we will prove that the first passage time of
the drawdown process $D_t$ below some level $\theta$ is the optimal stopping time for \eqref{gtau}
(hence also for (\ref{Fdecomposition})). That is, we will prove that
\begin{equation}\label{tau*pierwsze}
\tau^*=\tau_D^-(\theta)\in\mathcal{T}
\end{equation}
for an appropriate $\theta\in [0,a)$.

For the stopping rule (\ref{tau*pierwsze}) and for $y>\theta$ we will explicitly compute
$g_{\tau_D^-(\theta)}(y,p)$ given in \eqref{gtaub}.
Note that if $y>\theta$ then
\begin{align}
g(y,p,\theta):=g_{\tau_D^-(\theta)}(y,p)=\tilde{f}(\theta,p)\mathbb{E}_{|y}\left[e^{-r\tau_D^-(\theta)}; \tau_D^-(\theta)<\tau_D^+(a)\right]=\tilde{f}(\theta,p)\frac{W^{(r)}(a-y)}{W^{(r)}(a-\theta)}.\label{g}
\end{align}
Furthermore, if $y\leq\theta$ then the investor will terminate the contract immediately:
\begin{eqnarray}
g(y,p,\theta):=\mathbb{E}_{|y}\left[e^{-r\tau_D^-(\theta)}\tilde{f}(D_{\tau_D^-(\theta)},p);\tau_D^-(\theta)<\tau_D^+(a)\right]=\tilde{f}(y,p).\label{g=f}
\end{eqnarray}
Thus, for $\theta\in [0,y]$ we have
\begin{align}
F(y,p,\theta)&=f(y,p)+g(y,p,\theta) \nonumber\\
&=\left(\frac{p}{r}+\alpha\right)Z^{(r)}(a-y)+\left(\frac{p}{r}-c\right)\frac{W^{(r)}(a-y)}{W^{(r)}(a-\theta)}-\left(\frac{p}{r}+\alpha\right)Z^{(r)}(a-\theta)\frac{W^{(r)}(a-y)}{W^{(r)}(a-\theta)}-\frac{p}{r}.
\end{align}
Recall that, from Proposition \ref{Prop_decomp}, the value function $F(y,p)$ depends on the stopping time only through the function $G(y,p)$. If a chosen stopping time $\tau_D^-(\theta)$ is a correct guess, then the optimal level $\theta^*$ has to maximize the function $G(y,p)=\sup\limits_{\theta\geq 0}g(y,p,\theta)=g(y,p,\theta^*)$. Thus, from \eqref{Fdecomposition}, the same $\theta^*$ maximizes the value function $F(y,p)$.
We define
\begin{align}\label{teta*}
\theta^*:=\inf\left\{ \theta\in [0,a):\ \frac{\partial}{\partial\theta}g(y,p,\theta)=0\quad\text{and }\quad  g(y,p,\varsigma)\leq g(y,p,\theta)\ \textrm{for all $\varsigma\geq 0$}\right\}.
\end{align}
Note that $\theta^*>0$ because $g(y,p,\theta)$ increases at $\theta=0$. Indeed,
\begin{align}
\left[\frac{\partial}{\partial\theta}g(y,p,\theta)\right]_{|_{\theta=0}}=\tilde{f}^{\prime}(0)\frac{W^{(r)}(a-y)}{W^{(r)}(a)}+\tilde{f}(0)\frac{W^{(r)}(a-y)W^{\prime (r)}(a)}{(W^{(r)}(a))^2}>0,\nonumber
\end{align}
where the inequality follows from assumption \eqref{mainzalozenia} and the fact that $\tilde{f}^{\prime}(0)=-(\frac{p}{r}+\alpha)\xi^{\prime}(0)=0$.
We will now verify that \eqref{tau*pierwsze} indeed holds true, that is, $\tau_D^-(\theta^*)$ is an optimal stopping rule.
\begin{theorem}\label{th1}
Assume that \eqref{war1} holds.
The stopping time $\tau_D^-(\theta^*)$, with $\theta^*$ defined in \eqref{teta*},
is the optimal stopping rule for the stopping problems \eqref{gtau} and \eqref{F}.
Moreover, the price of the drawdown insurance contract with the cancellation
feature equals $F(y,p)=f(y,p)+g(y,p,\theta^*)$.
\end{theorem}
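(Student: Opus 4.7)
The plan is to apply the Verification Lemma \ref{war} to the killed drawdown process $\Upsilon_t:=D_t$ (absorbed at $\tau_0:=\tau_D^{+}(a)$), with state space $\mathbb{B}=[0,a)$, payoff $V(y):=\tilde{f}(y,p)$, and the candidate value function produced by setting $\theta=\theta^{*}$ in \eqref{g}--\eqref{g=f}:
\begin{equation*}
v^{*}(y)\;=\;g(y,p,\theta^{*})\;=\;
\begin{cases}
\tilde{f}(y,p),& 0\leq y\leq \theta^{*},\\[2pt]
\tilde{f}(\theta^{*},p)\,\dfrac{W^{(r)}(a-y)}{W^{(r)}(a-\theta^{*})},& \theta^{*}<y<a.
\end{cases}
\end{equation*}
Once (i)--(ii) of Lemma \ref{war} are verified, the lemma yields $G(y,p)=v^{*}(y)$ with $\tau_D^{-}(\theta^{*})$ optimal for \eqref{gtau}, and Proposition \ref{Prop_decomp} then produces the claimed $F(y,p)=f(y,p)+g(y,p,\theta^{*})$. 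The integrability hypothesis \eqref{verlemass} is immediate: $\tilde{f}$ is continuous and bounded on the compact interval $[0,a]$, and $r>0$ together with the spectral negativity of $X$ ensures $e^{-rt}v^{*}(D_t)\to 0$ almost surely.

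Condition (i) is short. In the stopping region $\{y\leq\theta^{*}\}$ equality $v^{*}=\tilde{f}$ holds by \eqref{g=f}. In the continuation region $\{y>\theta^{*}\}$ the required inequality $v^{*}(y)\geq\tilde{f}(y,p)$ rewrites, via \eqref{g=f} applied with $\theta:=y$, as $g(y,p,\theta^{*})\geq g(y,p,y)$, which is exactly the defining maximization property of $\theta^{*}$ in \eqref{teta*}.

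The substantive task is (ii), the $\mathbb{P}_{|y}$-supermartingale property of $M_t:=e^{-rt}v^{*}(D_{t\wedge\tau_D^{+}(a)})$. I would split the verification along the free boundary $\theta^{*}$. On $(\theta^{*},a)$, the strong Markov property together with the representation $v^{*}(y)=\mathbb{E}_{|y}[e^{-r\tau_D^{-}(\theta^{*})}\tilde{f}(\theta^{*},p);\,\tau_D^{-}(\theta^{*})<\tau_D^{+}(a)]$ and the exit identity \eqref{distXD-} make $M_t$ a genuine martingale until the exit from $(\theta^{*},a)$. On $[0,\theta^{*}]$ one has $v^{*}(y)=\tilde{f}(y,p)=-(p/r+\alpha)\xi(y)+p/r-c$; since $\xi$ is itself the $r$-discounted exit reward of the same boundary problem (Proposition \ref{Ksi}), the process $e^{-rt}\xi(D_t)$ is a $\mathbb{P}_{|y}$-martingale up to $\tau_D^{+}(a)$, so the finite-variation drift of $M_t$ in the stopping region equals $-(p-rc)\,e^{-rt}\diff t$, which is strictly negative by the standing assumption \eqref{war1}. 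To weld the two pieces into a single global supermartingale I would invoke the smooth-fit first-order condition $\partial_\theta g(y,p,\theta)\big|_{\theta=\theta^{*}}=0$ encoded in \eqref{teta*}; combined with the regularity \eqref{scalec1} of $W^{(r)}$, it makes $v^{*}$ continuously differentiable at $\theta^{*}$, so that It\^o's formula for semimartingales with only upward jumps (spectral positivity of $D$, inherited from the spectral negativity of $X$) yields no singular contribution at the boundary.

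The main obstacle is precisely this gluing step at $\theta^{*}$: on each side the generator computation is essentially a one-liner, but welding the true-martingale behaviour on $(\theta^{*},a)$ to the strict supermartingale behaviour on $[0,\theta^{*}]$ into one global supermartingale depends on the smooth-fit identity embedded in \eqref{teta*} and on the $\mathcal{C}^{1}$-regularity \eqref{scalec1} of the scale functions. Once (ii) is in place, Lemma \ref{war} delivers the optimality of $\tau_D^{-}(\theta^{*})$ for \eqref{gtau} with $G(y,p)=v^{*}(y)$, and Proposition \ref{Prop_decomp} completes the proof.
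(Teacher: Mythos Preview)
Your proposal is correct and follows essentially the same route as the paper's own proof: both apply Verification Lemma~\ref{war} to $\Upsilon_t=D_t$ killed at $\tau_D^+(a)$ with $V=\tilde f$, establish (i) by evaluating at $\theta=y$ and invoking the maximality of $\theta^{*}$, and establish (ii) by showing the martingale property on $(\theta^{*},a)$ via the strong Markov representation and then computing the generator drift $-r(p/r-c)\leq 0$ on the stopping region using that $e^{-rt}\xi(D_t)$ is itself a martingale. The one difference is that you make the smooth-fit gluing at $\theta^{*}$ explicit (correctly deducing $\mathcal{C}^{1}$-fit of $v^{*}$ from the first-order condition in \eqref{teta*}), whereas the paper handles this step implicitly through the generator identity.
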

\begin{proof}
Based on the optimal stopping problem \eqref{gtau} it is sufficient to check that
$\tau^*=\tau_D^-(\theta^*)$ fulfills the two conditions of Verification Lemma \ref{war}
with $\Upsilon_t=D_t$, $\mathbb{B}=\mathbb{R}_+$, $\tau_0=\tau_D^+(a)$, $V(x)=\tilde{f}(x,p)$.
Note that in this case the assumption \eqref{verlemass} is clearly satisfied.
In order to prove (\ref{i}) of Verification Lemma \ref{war} it suffices to show that
$g(y,p,\theta)-\tilde{f}(y,p)\geq 0$ for some $\theta$.
Observe that taking $\theta=y$ for $y\in (0,a)$ produces
\begin{align}
&g(y,p,y)-\tilde{f}(y,p)=\mathbb{E}_{|y}\left[e^{-r\tau_D^-(y)}\tilde{f}(D_{\tau_D^-(y)},p);\tau_D^-(y)<\tau_D^+(a)\right]-\tilde{f}(y,p)=0,
\end{align}
where $\tilde{f}(y,p)$ is given in \eqref{numer}.
Thus (\ref{i}) follows from the fact that
$\theta^*$ maximizes $g(y,p,\cdot)$.

Now we will prove (\ref{ii}). Note that
the stopped process $$e^{-r(t\wedge\tau_D^+(a)\wedge\tau_D^-(\theta^*))}g(D_{t\wedge\tau_D^+(a)\wedge\tau_D^-(\theta^*)},p,\theta^*)$$
is a martingale. This follows from the strong Markov property: from the definition of $g(y,p, \theta^*)$ in \eqref{g}
and \eqref{g=f}
we have
\begin{align}
&\mathbb{E}\left[e^{-r(\tau_D^+(a)\wedge\tau_D^-(\theta^*))}g(D_{\tau_D^+(a)\wedge\tau_D^-(\theta^*)},p,\theta^*)|\ \mathcal{F}_{t\wedge\tau_D^+(a)\wedge\tau_D^-(\theta^*)}\right]\nonumber\\
&\qquad=\mathbb{E}\left[e^{-r(\tau_D^+(a)\wedge\tau_D^-(\theta^*))}\mathbb{E}_{|D_{\tau_D^+(a)\wedge\tau_D^-(\theta^*)}}\left[e^{-r\tau_D^-(\theta^*)}\tilde{f}(\theta^*,p);\tau_D^-(\theta^*)<\tau_D^+(a)\right]\big|\ \mathcal{F}_{t\wedge\tau_D^+(a)\wedge\tau_D^-(\theta^*)}\right]\nonumber\\
&\qquad=\mathbb{E}\left[e^{-r(t\wedge\tau_D^+(a)\wedge\tau_D^-(\theta^*))}\mathbb{E}_{|D_{t\wedge \tau_D^+(a)\wedge\tau_D^-(\theta^*)}}\left[e^{-r\tau_D^-(\theta^*)}\tilde{f}(\theta^*,p);\tau_D^-(\theta^*)<\tau_D^+(a)\right]\big|\ \mathcal{F}_{t\wedge\tau_D^+(a)\wedge\tau_D^-(\theta^*)}\right]\nonumber\\
&\qquad=e^{-r(t\wedge\tau_D^+(a)\wedge\tau_D^-(\theta^*))}g(D_{t\wedge\tau_D^+(a)\wedge\tau_D^-(\theta^*)},p,\theta^*).\nonumber
\end{align}
Hence,
\[\mathcal{A}_Dg(y, p,\theta^*)-rg(y, p,\theta^*)=0\]
for all $y\in (\theta^*, a)$ and for the full generator $\mathcal{A}_D$ of the process $D$.
Moreover, from (\ref{g=f}) we know that for $y\in (0,\theta^*)$ we have $g(y,p,\theta^*)=\tilde{f}(y,p)$.
Therefore, for $y\in (0,\theta^*)$,
\begin{align}
\mathcal{A}_Dg(y,p,\theta^*)-rg(y,p,\theta^*)&=\mathcal{A}_D\tilde{f}(y,p)-r\tilde{f}(y,p)\nonumber\\
&=-\mathcal{A}_Df(y,p)+rf(y,p)+rc\nonumber\\
&=-\left(\frac{p}{r}+\alpha\right)\left[\mathcal{A}_D\xi(y)-r\xi(y)\right]-r\left(\frac{p}{r}-c\right).\nonumber
\end{align}
Now, the strong Markov property of the process $D_t$ implies that
the process $e^{-r(t\wedge \tau_D^+(a))}\xi(D_{t\wedge \tau_D^+(a)})=\mathbb{E}_{|y}[e^{-rs}\xi(D_s)|\mathcal{F}_{t\wedge \tau_D^+(a)}]$
is a martingale. Hence $\mathcal{A}_D\xi(y)-r\xi(y)=0$ for $y\in (0,\theta^*)$ since $\theta^*< a$.

Thus the process $e^{-r(t\wedge\tau_D^+(a))}g(D_{t\wedge\tau_D^+(a)},p,\theta^*)$ is a supermartingale
because for $y\in(0,\theta^*)$ we have
\begin{align}
\mathcal{A}_Dg(y,p,\theta^*)-rg(y,p,\theta^*)=-r\left(\frac{p}{r}-c\right)\leq 0,\nonumber
\end{align}
where the last inequality follows from the assumption (\ref{war1}).
This completes the proof.
\end{proof}

\section{Incorporating a drawup contingency}\label{sec:drawup}
\subsection{Fair premium}
The investor might like to buy a contract which has some maturity conditions,
meaning that a contract will end when these conditions are fulfilled.
Therefore in this paper we also consider
an insurance contract which provides protection against any specified drawdown of log-return of the asset price
with a certain drawup contingency.
In particular, this contract may expire earlier if a fixed drawup event occurs prior to some fixed drawdown.
Choosing the drawup event is very natural since it corresponds to some market upward trend,
and therefore the investor may want to stop paying the premium when this event happens.
Under a risk-neutral measure the value of this contract equals
\begin{align}
k(y,z,p):=&\mathbb{E}_{|y|z}\left[-\int_0^{\tau^+_D (a)\wedge\tau^+_U(b)}e^{-rt}p\diff t+\alpha e^{-r\tau^+_D(a)}\mathbbm{1}_{\left(\tau^+_D(a)\leq\tau^+_U(b)\right)}\right]\label{kyzp}\end{align}
for some fixed $a>b>0$.
First we will compute this value function and then we will identify the fair premium $p^*$ under which
\begin{equation}\label{fairk}
k(y,z,p^*)=0.
\end{equation}
Note that
\begin{align}
k(y,z,p)=&\left(\frac{p}{r}+\alpha\right)\nu(y,z)+\frac{p}{r}\lambda(y,z)-\frac{p}{r},\label{fdrawup}
\end{align}
where
\begin{align}
\nu(y,z)&:=\mathbb{E}_{|y|z}\left[e^{-r\tau^+_D(a)};\ \tau_D^+(a)\leq\tau_U^+(b)\right],\nonumber\\
\lambda(y,z)&:=\mathbb{E}_{|y|z}\left[e^{-r\tau_U^+(b)};\ \tau^+_U(b)<\tau^+_D(a)\right].\nonumber
\end{align}

To get formulas for $\nu$ and $\lambda$ we have to make some additional observations.
\begin{Prop}\label{propnowe} Let $y$ and $z$ denote the starting positions for the drawdown and drawup processes, respectively.
For $a>b\geq 0$ the following events are equivalent:
\begin{align}
&\left\{\tau_U^+(b)<\tau_D^+(a),\ D_0=y,\ U_0=z\right\}=\{\tau^+_{b-z}<\tau^-_{(y-a)\vee (-z)}\}\nonumber\\
&\qquad\cup\left\{\overline{X}_{\tau_U^+(b)}\vee y-\underline{X}_{\tau_U^+(b)}<a,\ \underline{X}_{\tau_U^+(b)}\leq-z\right\},\label{pierwsze}\\
&\left\{\tau_D^+(a)<\tau_U^+(b),\ D_0=y,\ U_0=z\right\}=\{\tau^-_{y-a}<\tau^+_{(b-z)},\  y-a\geq -z\}\nonumber\\
&\qquad\cup\left\{\overline{X}_{\tau_U^+(b)}\vee y-\underline{X}_{\tau_U^+(b)}\geq a,\ \underline{X}_{\tau_U^+(b)}\leq -z,\ \overline{X}_{\tau_U^+(b)}\leq b-z,\ y-a<-z\right\}.\label{drugie}
\end{align}
\end{Prop}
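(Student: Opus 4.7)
The plan is to carry out a trajectory-by-trajectory analysis of the spectrally negative process $X$, splitting each event on the left-hand side according to the position of $\underline{X}_{\tau_U^+(b)}$ relative to $-z$ (for \eqref{pierwsze}) or by whether $y-a\geq -z$ or $y-a<-z$ (for \eqref{drugie}). The key reduction throughout is that so long as $\underline{X}_t\geq -z$ one has $U_t=X_t+z$ and hence $\tau_U^+(b)=\tau^+_{b-z}$, while so long as $\overline{X}_t\leq y$ one has $D_t=y-X_t$ and hence $\tau_D^+(a)=\tau^-_{y-a}$. Throughout I abbreviate $M:=\overline{X}_{\tau_U^+(b)}$ and $m:=\underline{X}_{\tau_U^+(b)}$, with $t_M,t_m$ the corresponding attainment times.

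For \eqref{pierwsze} I would partition $\{\tau_U^+(b)<\tau_D^+(a)\}$ according to $\{m>-z\}$ or $\{m\leq -z\}$. On $\{m>-z\}$ the reduction gives $\tau_U^+(b)=\tau^+_{b-z}$, and the condition $D_t<a$ on $[0,\tau^+_{b-z}]$ combined with $X_t>-z$ becomes precisely $X_t>(y-a)\vee(-z)$. The only nontrivial step is verifying $D_t<a$ even on excursions of $\overline{X}$ above $y$; here one uses $\overline{X}_t\leq b-z$ together with the inequality $(b-z)-((y-a)\vee(-z))\leq a$, which follows from $-z\geq (b-z)-a$ (equivalently from the hypothesis $a>b$). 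On $\{m\leq -z\}$ the pointwise bound $D_t\leq M\vee y-m$, valid uniformly in $t\leq\tau_U^+(b)$, immediately supplies one direction of the equivalence $\{\tau_D^+(a)>\tau_U^+(b)\}\Leftrightarrow \{M\vee y-m<a\}$.

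For \eqref{drugie} I would work by regime. If $y-a\geq -z$, then $y+z\geq a>b$, so $y>b-z$; hence if $\overline{X}$ exceeded $y$ before $\tau_D^+(a)$, then at $\tau^+_y$ the drawup would already equal $y+z\geq a>b$, contradicting $\tau_D^+(a)<\tau_U^+(b)$. Therefore $\overline{X}_{\tau_D^+(a)}\leq y$, forcing $\tau_D^+(a)=\tau^-_{y-a}$; a short argument using $\underline{X}\geq -z$ on $[0,\tau^-_{y-a}]$ then converts $\tau_D^+(a)<\tau_U^+(b)$ into $\tau^-_{y-a}<\tau^+_{b-z}$. If instead $y-a<-z$, the same uniform bound shows that while $\underline{X}>-z$ the drawdown cannot exceed $\max(b,y+z)<a$, so necessarily $m\leq -z$; a symmetric computation at $t_M$ using $X_{\tau_U^+(b)}=m+b$ gives $M\leq b-z$; and the condition that $D$ attained level $a$ yields one direction of $M\vee y-m\geq a$.

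The delicate step shared by both equalities is the remaining direction: moving between $M\vee y-m$ and $\sup_{t\leq\tau_U^+(b)}D_t$, an inequality that in general is not sharp because $M$ and $m$ may be attained at distinct times. I would resolve it by splitting on whether $t_M\leq t_m$ or $t_m<t_M$. In the first case, $X_{t_M}=M$ by spectral negativity and the running maximum stays constant at $M$ on $[t_M,\tau_U^+(b)]$, so $\overline{X}_{t_m}=M$ and $D_{t_m}=M\vee y-m$, matching the extremal quantity exactly. In the second case, the constraint $\overline{X}_t\leq b-z$ together with $X_{\tau_U^+(b)}=m+b$ forces $t_M=\tau_U^+(b)$ and $M=m+b$; then $M\vee y-m\geq a$ combined with $a>b$ forces $M<y$ and $m\leq y-a$, so the process crosses $y-a$ on its descent to $m$ and the drawdown triggers at $\tau^-_{y-a}\leq t_m<\tau_U^+(b)$, while the dual estimate $y-m=M\vee y-m$ closes the other direction. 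Assembling these path-wise identifications yields both equalities.
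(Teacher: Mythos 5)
Your proof is correct and follows essentially the same route as the paper's: a pathwise (``geometric'') case analysis that splits \eqref{pierwsze} according to whether $\underline{X}_{\tau_U^+(b)}>-z$ or $\leq -z$ and \eqref{drugie} according to whether $y-a\geq -z$ or $<-z$, reducing each piece to first-passage events of $X$. The one place you go beyond the paper is the careful justification, via the attainment times $t_M$ and $t_m$, of the passage between $\sup_{t\leq\tau_U^+(b)}D_t$ and $\overline{X}_{\tau_U^+(b)}\vee y-\underline{X}_{\tau_U^+(b)}$ --- a step the paper disposes of in a single (and rather loosely stated) sentence --- and your treatment of it is correct and worthwhile.
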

\begin{proof}
We use geometric path arguments.
To prove \eqref{pierwsze} note that
\begin{align}
\left\{\tau_U^+(b)<\tau_D^+(a),\ D_0=y,\ U_0=z\right\}=&\left\{\tau_U^+(b)<\tau_D^+(a),\ \underline{X}_{\tau_U^+(b)}>-z,\ D_0=y,\ U_0=z\right\}\nonumber\\
&\cup\left\{\tau_U^+(b)<\tau_D^+(a),\ \underline{X}_{\tau_U^+(b)}\leq-z,\ D_0=y\right\}.\nonumber
\end{align}
The event $\{\underline{X}_{\tau_U^+(b)}>-z,\ U_0=z\}$ is equivalent to the requirement that $X_{\tau_U^+(b)}=b-z$ and
that $X_t$ cannot cross the $y-a$ level before $\tau_U^+(b)$.
Therefore,
\begin{align}
\left\{\tau_U^+(b)<\tau_D^+(a),\ \underline{X}_{\tau_U^+(b)}>-z,\ D_0=y,\ U_0=z\right\}=\left\{\tau^+_{b-z}<\tau^-_{(y-a)\vee (-z)},\ D_0=y,\ U_0=z\right\}.\label{pierobs}
\end{align}
Now, let us consider the case when $\underline{X}_{\tau_U^+(b)}\leq -z$. If $X_{\tau_U^+(b)}<\overline{X}_{\tau_U^+(b)}\vee y$ then $\overline{D}_{\tau_U^+(b)}=\overline{X}_{\tau_U^+(b)}\vee y - X_{\tau_U^+(b)}$ and it has to be less than $a$, since the drawup happens before the drawdown. On the other hand, if $X_{\tau_U^+(b)}=\overline{X}_{\tau_U^+(b)}\vee y$ then $b=U_{\tau_U^+(b)}=\overline{X}_{\tau_U^+(b)}\vee y-\underline{X}_{\tau_U^+(b)}<a$, since $b<a$.
Thus, we get
\begin{align}
&\left\{\tau_U^+(b)<\tau_D^+(a),\ \underline{X}_{\tau_U^+(b)}\leq -z,\ D_0=y,\ U_0=z\right\}\nonumber\\
&\qquad=\left\{
\overline{X}_{\tau_U^+(b)}\vee y-\underline{X}_{\tau_U^+(b)}<a,\ \underline{X}_{\tau_U^+(b)}\leq-z,\ D_0=y,\ U_0=z\right\}.\label{drobs}
\end{align}
Observations \eqref{pierobs} and \eqref{drobs} complete the proof of \eqref{pierwsze}.

To prove \eqref{drugie} we again consider two scenarios:
\begin{align}
\left\{\tau_D^+(a)<\tau_U^+(b),\ D_0=y,\ U_0=z\right\}=&\left\{\tau_D^+(a)<\tau_U^+(b),\ D_0=y,\ U_0=z,\ y-a<-z\right\}\nonumber\\
&\cup\left\{\tau_D^+(a)<\tau_U^+(b),\ D_0=y,\ U_0=z,\ y-a\geq -z\right\}.\nonumber
\end{align}
The case $y-a>-z$ together with the assumption $b\leq a$ implies that $b-a<y$.
This means that the event when a drawdown occurs before a drawup is the same as the one when
the process $X$ crosses $y-a$ before it hits $b-z$. That is,
\begin{align}
\left\{\tau_D^+(a)<\tau_U^+(b),\ D_0=y,\ U_0=z,\ y-a\geq-z\right\}=\left\{\tau^-_{y-a}<\tau^+_{b-z},\ D_0=y,\ U_0=z,\ y-a\geq-z\right\}.\nonumber
\end{align}
If $y-a\leq -z$ then $X$ crosses level $-z$ before the drawdown event occurs. Additionally, $X$ can cross level $y$ but it cannot cross level $b-z$, because otherwise a drawup would occur. Thus,
\begin{align}
&\left\{\tau_D^+(a)<\tau_U^+(b),\ D_0=y,\ U_0=z,\ y-a<-z\right\}\nonumber\\
&\qquad=\left\{\overline{X}_{\tau_U^+(b)}\vee y-\underline{X}_{\tau_U^+(b)}>a,\ \underline{X}_{\tau_U^+(b)}\leq-z,\ \underline{X}_{\tau_U^+(b)}\leq b-z,\ y-a<-z\right\}.\nonumber
\end{align}
This completes the proof of \eqref{drugie}.
\end{proof}
Note that, for $b<a$, we have
\begin{align}
\mathbb{E}_{|y|z}\left[e^{-r\tau_U^+(b)};\ \tau_D^+(a)<\tau_U^+(b)\right]=\mathbb{E}_{|y|z}\left[e^{-r\tau_D^+(a)};\ \tau_D^+(a)<\tau_U^+(b)\right]\mathbb{E}\left[e^{-r\tau_U^+(b)}\right]\nonumber
\end{align}
Proposition \ref{propnowe} and the above observation yield the following crucial corollary.
\begin{cor}\label{lambdanu}
For $a>b$ we have
\begin{align}
&\nu (y,z)=\mathbb{E}\left[e^{-r\tau^-_{y-a}};\tau^-_{y-a}<\tau^+_{(b-z)}\right]\mathbbm{1}_{(y+z\geq a)}\nonumber\\
&\qquad+\mathbb{E}\left[e^{-r\tau_U^+(b)};\overline{X}_{\tau_U^+(b)}\vee y-\underline{X}_{\tau_U^+(b)}\geq a,\underline{X}_{\tau_U^+(b)}\leq -z,\ \overline{X}_{\tau_U^+(b)}\leq b-z\right]\frac{\mathbbm{1}_{(y+z<a)}}{\mathbb{E}\left[e^{-r\tau_U^+(b)}\right]}\nonumber
\end{align}
and
\begin{align}
&\lambda (y,z)=\mathbb{E}\left[e^{-r\tau^+_{b-z}};\tau^+_{b-z}<\tau^-_{(y-a)\vee (-z)}\right]\nonumber\\
&\qquad+\mathbb{E}\left[e^{-r\tau_U^+(b)};\overline{X}_{\tau_U^+(b)}\vee y-\underline{X}_{\tau_U^+(b)}<a,\quad \underline{X}_{\tau_U^+(b)}\leq -z\right].\nonumber
\end{align}
\end{cor}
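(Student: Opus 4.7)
The plan is to combine the path decompositions of Proposition \ref{propnowe} with the factorization stated immediately above the corollary, evaluating each resulting piece either by a standard two-sided exit formula of $X$ or by the joint-law identity \eqref{m2}.

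For $\lambda(y,z)$, I would substitute \eqref{pierwsze} into the definition of $\lambda$. The two constituent events are disjoint, being distinguished by whether $\underline{X}_{\tau_U^+(b)}>-z$ or $\underline{X}_{\tau_U^+(b)}\leq -z$. On the first event, spectral negativity of $X$ forces the trajectory to reach $b-z$ continuously with the running infimum never descending to $-z$, so $\tau_U^+(b)$ coincides with $\tau^+_{b-z}$; the contribution then reduces to the standard two-sided exit expectation $\mathbb{E}[e^{-r\tau^+_{b-z}};\,\tau^+_{b-z}<\tau^-_{(y-a)\vee(-z)}]$. On the second event, the expectation is already in a form covered directly by \eqref{m2} and requires no further manipulation. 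Summing the two pieces yields the formula for $\lambda$.

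For $\nu(y,z)$, I would insert \eqref{drugie} into the definition of $\nu$. The two pieces split according to the sign of $y+z-a$. On $\{y+z\geq a\}$, the proposition gives $\tau_D^+(a)=\tau^-_{y-a}$, so the contribution is again a standard two-sided exit expectation of $X$. On $\{y+z<a\}$, however, the constraint in \eqref{drugie} is formulated through the state at $\tau_U^+(b)$ rather than at $\tau_D^+(a)$, so to produce the required weight $e^{-r\tau_D^+(a)}$ I would invoke the factorization
\begin{align*}
\mathbb{E}_{|y|z}\left[e^{-r\tau_U^+(b)};\, \tau_D^+(a)<\tau_U^+(b)\right] = \mathbb{E}_{|y|z}\left[e^{-r\tau_D^+(a)};\, \tau_D^+(a)<\tau_U^+(b)\right]\mathbb{E}\left[e^{-r\tau_U^+(b)}\right],
\end{align*}
dividing through by $\mathbb{E}[e^{-r\tau_U^+(b)}]$ to obtain the denominator in the second term of $\nu$.

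The main obstacle is justifying this factorization. It rests on showing that on the event $\{\tau_D^+(a)<\tau_U^+(b),\ y+z<a\}$ the drawup resets completely, i.e.\ $U_{\tau_D^+(a)}=0$, so that the strong Markov property at $\tau_D^+(a)$ renders $\tau_U^+(b)-\tau_D^+(a)$ independent of $\mathcal{F}_{\tau_D^+(a)}$ and distributed as $\tau_U^+(b)$ under $\mathbb{P}$. To establish $U_{\tau_D^+(a)}=0$ I would argue that any earlier excursion of $X$ to a level strictly below $X_{\tau_D^+(a)}$ would force $U$ above $a$ (hence above $b$) at the subsequent moment $X$ climbs back to its running maximum $M=\overline{X}_{\tau_D^+(a)}\vee y$, contradicting $\tau_U^+(b)>\tau_D^+(a)$; thus $X_{\tau_D^+(a)}$ is the new all-time infimum. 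Combining this with the bound $X_{\tau_D^+(a)}\leq M-a<-z$ (which uses both $y+z<a$ and the drawup constraint $M+z<b<a$) gives $\underline{X}_{\tau_D^+(a)}\wedge(-z)=X_{\tau_D^+(a)}$, and therefore $U_{\tau_D^+(a)}=0$, as required.
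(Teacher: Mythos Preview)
Your proposal is correct and follows essentially the same route as the paper: the corollary is obtained by inserting the event decompositions \eqref{pierwsze}--\eqref{drugie} of Proposition~\ref{propnowe} into the definitions of $\lambda$ and $\nu$, and then applying the factorization displayed just before the corollary to convert the $\tau_U^+(b)$-weighted piece of $\nu$ into a $\tau_D^+(a)$-weighted one. The paper simply asserts this factorization as an ``observation'' without proof, whereas you supply the missing justification that $U_{\tau_D^+(a)}=0$ on the relevant event; your argument for this is sound (the one step you leave implicit---that an earlier visit of $X$ below $X_{\tau_D^+(a)}$ forces the running maximum to be strictly below $M$ at that time, so that $X$ must subsequently climb to $M$---follows from the drawdown being $<a$ before $\tau_D^+(a)$).
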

Both functions $\lambda$ and $\nu$ can now be calculated by taking the inverse Laplace transform of (\ref{m2}).

\begin{theorem}\label{thm2}
The price of the contract \eqref{kyzp} is given in \eqref{fdrawup} with $\lambda$ and $\nu$ identified in
Corollary \ref{lambdanu}.
\end{theorem}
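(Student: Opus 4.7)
The plan is to derive (\ref{fdrawup}) directly from (\ref{kyzp}) and then apply Corollary~\ref{lambdanu} to obtain the closed-form expressions. First I would integrate the deterministic running premium: setting $T := \tau_D^+(a)\wedge \tau_U^+(b)$, one has $-\int_0^{T}e^{-rt}p\diff t = -\tfrac{p}{r}(1-e^{-rT})$, so substituting into (\ref{kyzp}) and taking the expectation under $\mathbb{P}_{|y|z}$ gives
\[
k(y,z,p) \;=\; -\tfrac{p}{r} + \tfrac{p}{r}\,\mathbb{E}_{|y|z}\bigl[e^{-rT}\bigr] + \alpha\,\nu(y,z).
\]
Since $e^{-rT}$ vanishes on $\{T=\infty\}$, decomposing the remaining expectation along the disjoint partition $\{T<\infty\}=\{\tau_D^+(a)\leq\tau_U^+(b)\}\cup\{\tau_U^+(b)<\tau_D^+(a)\}$ yields $\mathbb{E}_{|y|z}[e^{-rT}]=\nu(y,z)+\lambda(y,z)$, and collecting terms reproduces exactly (\ref{fdrawup}).

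Next I would invoke Corollary~\ref{lambdanu}, whose content is logically independent of the algebraic step above. Proposition~\ref{propnowe} writes each of the events $\{\tau_U^+(b)<\tau_D^+(a)\}$ and $\{\tau_D^+(a)<\tau_U^+(b)\}$, on the appropriate sub-regime of $(y,z)$, either as a two-sided exit event for $X$ between the shifted barriers $b-z$ and $(y-a)\vee(-z)$, or as an event phrased through the joint triple $(\tau_U^+(b),\overline X_{\tau_U^+(b)},\underline X_{\tau_U^+(b)})$. The first type of event is evaluated by \eqref{twosided1}--\eqref{twosided2} and the second by specialising \eqref{m2}; to bring the last expression for $\nu$ into the stated form, the factorisation
\[
\mathbb{E}_{|y|z}\bigl[e^{-r\tau_U^+(b)};\tau_D^+(a)<\tau_U^+(b)\bigr] = \mathbb{E}_{|y|z}\bigl[e^{-r\tau_D^+(a)};\tau_D^+(a)<\tau_U^+(b)\bigr]\cdot\mathbb{E}\bigl[e^{-r\tau_U^+(b)}\bigr]
\]
is used to swap $\tau_U^+(b)$ for $\tau_D^+(a)$ in the Laplace argument of the raw joint transform.

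The main obstacle is the justification of this factorisation. It rests on the strong Markov property at $\tau_D^+(a)$ combined with a path-level verification that on $\{\tau_D^+(a)<\tau_U^+(b),\ y+z<a\}$ the drawup process is zero at the drawdown epoch: the simultaneous requirements $D_{\tau_D^+(a)}\geq a$ and $U_{\tau_D^+(a)}<b$ together with $y+z<a$ force $X$ to attain a new running infimum strictly below $-z$ at (or just before) the drawdown time, at which instant $U = X - \underline X\wedge(-z)=0$ by spectral negativity of $X$. Once $U$ is identified as being in state~$0$ at $\tau_D^+(a)$, the residual time to reach drawup $b$ has the unconditional law of $\tau_U^+(b)$ under $\mathbb{P}$, and the factorisation follows. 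Modulo this identification, the rest of the argument is the algebraic assembly outlined above, and Theorem~\ref{thm2} follows.
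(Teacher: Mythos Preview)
Your proposal is correct and mirrors the paper: there Theorem~\ref{thm2} carries no separate proof, since \eqref{fdrawup} is obtained by exactly your integration-and-partition step (stated in the paper as the one-line ``Note that'' preceding \eqref{fdrawup}) and Corollary~\ref{lambdanu} is then invoked as a black box. Your additional justification of the factorisation identity, via the observation that $U_{\tau_D^+(a)}=0$ on $\{\tau_D^+(a)<\tau_U^+(b),\,y+z<a\}$, goes beyond what Theorem~\ref{thm2} itself requires---it is really part of the argument for Corollary~\ref{lambdanu}---but it is correct and usefully fills in detail the paper leaves implicit.
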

From \eqref{fdrawup} we obtain the following theorem.
\begin{theorem}
For the contract \eqref{kyzp} the fair premium defined in \eqref{fairk} equals
\begin{equation}
p^* = \frac{r\alpha\nu(y,z)}{1-\lambda(y,z)-\nu(y,z)},\label{p*drawup}
\end{equation}
where the functions $\lambda$ and $\nu$ are given in Corollary \ref{lambdanu}.
\end{theorem}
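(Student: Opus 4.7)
The plan is to extract $p^*$ directly from the closed-form expression for $k(y,z,p)$ obtained in Theorem \ref{thm2}, by imposing the fairness condition \eqref{fairk} and solving the resulting linear equation in $p$. No new stochastic analysis is required: the heavy lifting---identifying $\nu$ and $\lambda$ through the Mijatovi\'c--Pistorius identity \eqref{m2} and the two-sided exit formulas---has already been done in Corollary \ref{lambdanu}, so what remains is purely algebraic.

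Concretely, substituting \eqref{fdrawup} into $k(y,z,p^*)=0$ yields
$$\left(\frac{p^*}{r}+\alpha\right)\nu(y,z)+\frac{p^*}{r}\lambda(y,z)-\frac{p^*}{r}=0.$$
Collecting the $p^*/r$ terms on one side gives
$$\frac{p^*}{r}\bigl(1-\nu(y,z)-\lambda(y,z)\bigr)=\alpha\,\nu(y,z),$$
from which \eqref{p*drawup} follows by a single division.

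The only point worth commenting on is the well-posedness of this last step, namely that $1-\nu(y,z)-\lambda(y,z)\neq 0$. Writing $T:=\tau_D^+(a)\wedge\tau_U^+(b)$ and noting that the events $\{\tau_D^+(a)\leq\tau_U^+(b)\}$ and $\{\tau_U^+(b)<\tau_D^+(a)\}$ are disjoint and cover $\{T<\infty\}$, one obtains
$$\nu(y,z)+\lambda(y,z)=\mathbb{E}_{|y|z}\bigl[e^{-rT};\ T<\infty\bigr].$$
For interior starting positions $y\in(0,a)$ and $z\in(0,b)$ one has $T>0$ almost surely, so this expectation is strictly less than $1$ whenever $r>0$, which is the standing assumption on the discount rate. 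Hence there is no real obstacle: once Theorem \ref{thm2} is in hand the statement reduces to one line of algebra together with this trivial positivity check on the denominator.
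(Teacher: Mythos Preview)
Your argument is correct and follows exactly the paper's own approach: the paper simply states ``From \eqref{fdrawup} we obtain the following theorem,'' i.e.\ it sets $k(y,z,p^*)=0$ in \eqref{fdrawup} and solves the resulting linear equation for $p^*$. Your additional remark on the nonvanishing of $1-\nu(y,z)-\lambda(y,z)$ is a welcome clarification that the paper omits.
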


\subsection{Cancellation feature}
We will also consider the possibility of terminating the previous contract.
Now, the protection buyer can terminate the position by paying a fee $c\geq 0$. The value of the contract then equals
\begin{align}
K(y,z,p):=&\sup\limits_{\tau\in\mathcal{T}}\mathbb{E}_{|y|z}\Bigg[-\int_0^{\tau^+_D (a)\wedge\tau^+_U(b)\wedge\tau}e^{-rt}p\diff t\nonumber\\
&+\alpha e^{-r\tau^+_D(a)}\mathbbm{1}_{\left(\tau^+_D(a)<\tau^+_U(b)\wedge\tau\right)}-ce^{-r\tau}\mathbbm{1}_{\left(\tau<\tau^+_D(a)\wedge\tau^+_U(b)\right)}\Bigg].\label{K}
\end{align}
As in the case of a cancellable drawdown contract, we can represent the contract value function as the sum of two parts:
one without cancellation feature and one that depends on the stopping time $\tau$.
\begin{Prop}\label{Kdecompose}
The cancellable drawup insurance value admits the decomposition
\begin{align}\label{kdecompose}
K(y,z,p)=k(y,z,p)+H(y,z,p),
\end{align}
where
\begin{align}
&H(y,z,p):=\sup\limits_{\tau\in\mathcal{T}}h_{\tau}(y,z,p),\label{fh}\\
&h_{\tau}(y,z,p):=\mathbb{E}_{|y|z}\left[e^{-r\tau}\tilde{k}(D_{\tau},U_{\tau},p);\ \tau<\tau_D^+(a)\wedge\tau_U^+(b)\right],\label{htau}\\
&\tilde{k}(y,z,p):=-k(y,z,p)-c\label{ktilde}
\end{align}
and $k(\cdot,\cdot,\cdot)$ is given in \eqref{fdrawup}.
\end{Prop}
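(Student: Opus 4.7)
The plan is to mirror the decomposition argument used in Proposition \ref{Prop_decomp}, now with the natural expiry $T := \tau_D^+(a)\wedge\tau_U^+(b)$ of the non-cancellable contract. First I would split the integral via $\int_0^{T\wedge\tau} = \int_0^T - \mathbbm{1}_{(\tau<T)}\int_\tau^T$ and rewrite the drawdown-payoff indicator by the identity
\[
\mathbbm{1}_{(\tau_D^+(a) < \tau_U^+(b)\wedge\tau)} = \mathbbm{1}_{(\tau_D^+(a)\leq \tau_U^+(b))}\bigl(1 - \mathbbm{1}_{(\tau<T)}\bigr),
\]
which holds up to the null event $\{\tau_D^+(a) = \tau_U^+(b)\}$: by spectral negativity the drawup hitting time $\tau_U^+(b)$ is reached continuously, while $\tau_D^+(a)$ may involve a downward jump, so their coincidence is a null set.

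Substituting these two identities into \eqref{K} isolates the $\tau$-independent part
\[
-\int_0^T e^{-rt}p\,\diff t + \alpha e^{-r\tau_D^+(a)}\mathbbm{1}_{(\tau_D^+(a)\leq \tau_U^+(b))},
\]
whose $\mathbb{P}_{|y|z}$-expectation is exactly $k(y,z,p)$, and a remainder in which every term carries $\mathbbm{1}_{(\tau<T)}$:
\[
\sup_{\tau\in\mathcal{T}}\mathbb{E}_{|y|z}\left[\mathbbm{1}_{(\tau<T)}\left(\int_\tau^T e^{-rt}p\,\diff t - \alpha e^{-r\tau_D^+(a)}\mathbbm{1}_{(\tau_D^+(a)\leq \tau_U^+(b))} - ce^{-r\tau}\right)\right].
\]

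To identify this remainder with $H(y,z,p)$ I would condition on $\mathcal{F}_\tau$ and invoke the strong Markov property applied to the Markov triple $(X_t,D_t,U_t)$. On $\{\tau<T\}$ the post-$\tau$ trajectory starts afresh from drawdown $D_\tau$ and drawup $U_\tau$, so the conditional expectation of $\int_\tau^T e^{-rt}p\,\diff t - \alpha e^{-r\tau_D^+(a)}\mathbbm{1}_{(\tau_D^+(a)\leq \tau_U^+(b))}$ is exactly $-e^{-r\tau}\,k(D_\tau,U_\tau,p)$ by the very definition \eqref{kyzp} of $k$. Combining with the penalty $-ce^{-r\tau}$, the bracket becomes $e^{-r\tau}\bigl(-k(D_\tau,U_\tau,p) - c\bigr) = e^{-r\tau}\tilde k(D_\tau,U_\tau,p)$, which is precisely the integrand of $h_\tau(y,z,p)$ from \eqref{htau}, yielding $K(y,z,p) = k(y,z,p) + H(y,z,p)$.

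The main obstacle, more delicate than in the single-stopping-time case of Proposition \ref{Prop_decomp}, is the bookkeeping of the three coupled indicator events $\{\tau<\tau_D^+(a)\}$, $\{\tau<\tau_U^+(b)\}$ and $\{\tau_D^+(a)\leq\tau_U^+(b)\}$: one has to verify that the drawdown-payoff indicator collapses cleanly into the product form above and that no spurious contribution survives on $\{\tau\geq T\}$ (in particular on $\{\tau_U^+(b)\leq\tau_D^+(a)\wedge\tau\}$). Once the indicator algebra is settled, the strong Markov step is essentially identical to the drawdown-only case.
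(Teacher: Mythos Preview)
Your proposal is correct and follows essentially the same route as the paper: split the premium integral at $T=\tau_D^+(a)\wedge\tau_U^+(b)$, rewrite the drawdown-payoff indicator so that the $\tau$-independent terms assemble into $k(y,z,p)$, and then apply the strong Markov property of $(D,U)$ at $\tau$ on $\{\tau<T\}$ to recognise the remainder as $e^{-r\tau}\tilde k(D_\tau,U_\tau,p)$. The only cosmetic difference is that the paper writes the indicator identity as $\mathbbm{1}_{(\tau_D^+(a)<\tau_U^+(b)\wedge\tau)}=\mathbbm{1}_{(\tau_D^+(a)<\tau_U^+(b))}-\mathbbm{1}_{(\tau<\tau_D^+(a)<\tau_U^+(b))}$ rather than your product form, and your explicit remark on the null event $\{\tau_D^+(a)=\tau_U^+(b)\}$ is in fact more careful than the paper's presentation.
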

\begin{proof} Using $\mathbbm{1}_{(\tau_D^+(a)<\tau_U^+(b)\wedge\tau)}=\mathbbm{1}_{(\tau_D^+(a)<\tau_U^+(b))}-\mathbbm{1}_{(\tau<\tau_D^+(a)<\tau_U^+(b))}$ we obtain
\begin{align}
&K(y,z,p)=\mathbb{E}_{|y|z}\left[-\int_0^{\tau_D^+ (a)\wedge\tau_U^+(b)}e^{-rt}p\diff t+\alpha e^{-r\tau_D^+(a)}\mathbbm{1}_{\left(\tau_D^+(a)<\tau_U^+(b)\right)}\right]\nonumber\\
&\qquad +\sup\limits_{\tau\in\mathcal{T}}\mathbb{E}_{|y|z}\Bigg[\int_{\tau\wedge\tau_D^+(a)\wedge\tau_U^+(b)}^{\tau_D^+(a)\wedge\tau_U^+(b)}e^{-rt}p\diff t-\alpha e^{-r\tau_D^+(a)}\mathbbm{1}_{\left(\tau<\tau_D^+(a)<\tau_U^+(b)\right)}-ce^{-r\tau}\mathbbm{1}_{\left(\tau<\tau_D^+(a)\wedge\tau_U^+(b)\right)}\Bigg].\nonumber
\end{align}
The first summand on the right hand side does not depend on the stopping time and is equal to the basic contract with drawup contingency, that is, it equals $k(y,z,p)$ given in \eqref{kyzp}. On the other hand, the second summand depends on $\tau$ through the event $\{\tau<\tau_D^+(a)\wedge\tau_U^+(b)\}$. The result follows now by the strong Markov property:
\begin{align}
K(y,z,p)=k(y,z,p)+\sup\limits_{\tau\in\mathcal{T}}\mathbb{E}_{|y|z}&\Bigg[e^{-r\tau}\mathbbm{1}_{\left(\tau<\tau_D^+(a)\wedge\tau_U^+(b)\right)}\mathbb{E}_{|D_\tau |U_\tau}\Big[\int_{0}^{\tau_D^+(a)\wedge\tau_U^+(b)}e^{-rt}p\diff t\Big]\nonumber\\
&-\alpha e^{-r\tau_D^+(a)}\mathbbm{1}_{\left(\tau<\tau_D^+(a)<\tau_U^+(b)\right)}-ce^{-r\tau}\mathbbm{1}_{\left(\tau<\tau_D^+(a)\wedge\tau_U^+(b)\right)}\Bigg].
\end{align}
\end{proof}
First note that if $\tilde{k}(D_{\tau_D^-(\theta)},U_{\tau_D^-(\theta)})<0$ for all $\theta$, then
it is not optimal to terminate the contract and hence $\tau=\infty$.
To avoid this case, we assume from now on that there exists $\theta_0$
for which $\tilde{k}(D_{\tau_D^-(\theta_0)},U_{\tau_D^-(\theta_0)})>0$. We can rewrite this assumption as follows:
\begin{align}
\frac{p}{r}-c>\left(\frac{p}{r}+\alpha\right)\nu(\theta_0,y+z-\theta_0)+\frac{p}{r}\lambda(\theta_0,y+z-\theta_0)\geq 0\label{war2a}
\end{align}
for $y+z\geq a$, and
\begin{align}
\frac{p}{r}-c>\left(\frac{p}{r}+\alpha\right)\nu(\theta_0,y-x_0-\theta_0)+\frac{p}{r}\lambda(\theta_0,y-x_0-\theta_0)\geq 0\label{war2b}
\end{align}
for $y+z<a$, where $x_0$ satisfies $\tilde{k}(\theta_0,y-x_0-\theta_0)=\min\limits_{x\in (y-a,-z)}\tilde{k}(\theta_0,y-x-\theta_0)$.
Additionally, because of the presence of the indicator in \eqref{htau}, without loss of generality we can assume
that
\begin{equation*}
b-z>y-\theta.
\end{equation*}
To identify the value of the contract $K$ we will now compute the function $H$ defined in \eqref{fh}.
We will again use the ``guess and verify'' approach. A candidate for the optimal strategy is
\begin{equation}\label{tau*drugie}
\tau^*=\tau_D^-(\theta)\end{equation}
for some $\theta\in [0,a)$.
We denote
\begin{align}
h(y,z,p,\theta):=h_{\tau_D^-(\theta)}(y,z,p).\label{h1}
\end{align}
We will compute this function now.
Note that for $\theta>y$ we have
\begin{align}
h(y,z,p,\theta)=\tilde{k}(y,z,p).\label{h=k}
\end{align}
Moreover,
$$U_{\tau^-_D(\theta)}=X_{\tau^-_D(\theta)}-\underline{X}_{\tau^-_D(\theta)}\wedge (-z)=y-\theta-\underline{X}_{\tau^-_D(\theta)}\wedge (-z).$$
Thus, by considering two disjoint possible scenarios $\{\underline{X}_{\tau_D^-(\theta)}> -z\}$ and
$\{\underline{X}_{\tau_D^-(\theta)}\leq-z\}$,
the expectation in \eqref{htau} can be rewritten for $y\geq\theta$ as
\begin{align}
&h(y,z,p,\theta)=\tilde{k}(\theta,y-\theta+z,p)\mathbb{E}_{|y|z}\left[e^{-r\tau_D^-(\theta)};\tau_D^-(\theta)<\tau^+_D(a)\wedge\tau_U^+(b),\underline{X}_{\tau_D^-(\theta)}> -z\right]\nonumber\\
&\qquad+\mathbb{E}_{|y|z}\left[e^{-r\tau_D^-(\theta)}\tilde{k}(\theta,y-\theta-\underline{X}_{\tau_D^-(\theta)},p);\tau_D^-(\theta)<\tau^+_D(a)\wedge\tau_U^+(b),\underline{X}_{\tau_D^-(\theta)}\leq-z\right].\label{nowyrepp}
\end{align}
We will now analyze the event appearing in both the last expectations.
\begin{Prop}\label{propDtheta}
The following events are equivalent:
\begin{align}
\left\{\tau_D^-(\theta)<\tau_D^+(a)\wedge\tau_U^+(b),\ D_0=y,\ U_0=z\right\}&=
\left\{\underline{X}_{\tau_D^-(\theta)}>y-a,\ \underline{X}_{\tau_D^-(\theta)}\wedge (-z)>y-\theta -b\right\}.\label{rhs}
\end{align}
\end{Prop}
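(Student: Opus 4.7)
The plan is to reduce the left-hand event to two pathwise conditions on $X$ by exploiting spectral negativity. Writing $M_t := \overline{X}_t \vee y$ and $m_t := \underline{X}_t \wedge (-z)$, so that $D_t = M_t - X_t$ and $U_t = X_t - m_t$, the first step is to pin down the geometry at time $\tau_D^-(\theta)$.

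First I would argue that $X_{\tau_D^-(\theta)} = y - \theta$ and $M_t = y$ for every $t \in [0, \tau_D^-(\theta)]$. Since $X$ has no positive jumps, its upward motion is continuous, so before $X$ can exceed $y$ it must pass through the strictly smaller level $y - \theta$. Setting $\sigma := \inf\{t \geq 0 : X_t \geq y - \theta\}$, one checks that for $t < \sigma$ we have $X_t < y - \theta < y$, hence $M_t = y$ and $D_t = y - X_t > \theta$, while at $t = \sigma$ continuity gives $X_\sigma = y - \theta$, so $M_\sigma = y$ and $D_\sigma = \theta$. This identifies $\sigma$ with $\tau_D^-(\theta)$ and yields the two identities.

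With this in hand, the drawdown half is immediate: on $[0,\tau_D^-(\theta)]$ we have $D_t = y - X_t$, so the requirement $\tau_D^+(a) > \tau_D^-(\theta)$ (that is, $D_t < a$ throughout) is equivalent to $X_t > y - a$ throughout, which in turn is the statement $\underline{X}_{\tau_D^-(\theta)} > y - a$.

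The substantive step is translating the drawup condition. I would establish the pathwise identity
\[
\sup_{t \in [0,\tau_D^-(\theta)]} U_t \;=\; U_{\tau_D^-(\theta)} \;=\; y - \theta - m_{\tau_D^-(\theta)},
\]
which rests on two monotonicity observations: $t \mapsto m_t$ is non-increasing, so $m_t \geq m_{\tau_D^-(\theta)}$ on the interval, and by the first step $X_t \leq y - \theta$ there, with equality only at the endpoint. Combining these, $U_t = X_t - m_t \leq (y-\theta) - m_{\tau_D^-(\theta)}$ with equality precisely at $t = \tau_D^-(\theta)$. Consequently $\tau_U^+(b) > \tau_D^-(\theta)$ is equivalent to $m_{\tau_D^-(\theta)} > y - \theta - b$, i.e.\ $\underline{X}_{\tau_D^-(\theta)} \wedge (-z) > y - \theta - b$. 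Reading the two equivalences in both directions yields both inclusions. The main obstacle is precisely this last step: recognising that the running supremum of $U$ on $[0, \tau_D^-(\theta)]$ is attained at the endpoint, which is itself a consequence of spectral negativity established in the first step.
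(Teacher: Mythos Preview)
Your proof is correct and follows the same geometric path argument as the paper: both hinge on the observation that, by spectral negativity, $\tau_D^-(\theta)$ coincides with the first passage of $X$ to the level $y-\theta$, so that $\overline{X}_{\tau_D^-(\theta)}\vee y = y$ and the drawdown and drawup conditions reduce to bounds on the running infimum. Your treatment is somewhat more explicit than the paper's---in particular your observation that $\sup_{t\leq\tau_D^-(\theta)} U_t = U_{\tau_D^-(\theta)}$ makes the drawup equivalence cleaner---but the underlying idea is identical.
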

\begin{proof}
Note that the stopping time $\tau_D^-(\theta)$ is when the process $X$ hits $y-\theta$.
This means that $X$ cannot exceed level $y$ before $\tau_D^-(\theta)$
and therefore we have $\overline{X}_{\tau_D^-(\theta)}\vee y=y.$
Now, the first event $\{\underline{X}_{\tau_D^-(\theta)}>y-a\}$ on the right hand side of \eqref{rhs}
corresponds to the situation when $\tau_D^+(a)$ is after $\tau_D^-(\theta)$.
On the second event
$\{\underline{X}_{\tau_D^-(\theta)}\wedge (-z)>y-\theta -b\}$
on the right hand side of \eqref{rhs}, the drawup process $U$ attains level $b$ only after
the first passage of $y-\theta$ by $X$.
In this case too, $\tau_U^+(b)$ cannot be before $\tau_D^-(\theta)$.
This observation completes the proof.
\end{proof}

Proposition \ref{propDtheta} and \eqref{nowyrepp} give the following representation of the function $h$ defined formally in (\ref{h1}).
\begin{Lemma}\label{abovelemma}
For $y\geq\theta$ we have
\begin{align}
&h(y,z,p,\theta )=\tilde{k}(\theta,y+z-\theta)\frac{W^{(r)}\left((a-y)\wedge z\right)}{W^{(r)}\left(y-\theta+(a-y)\wedge z\right)}\mathbbm{1}_{(y-\theta<b-z)}
\nonumber\\
&\qquad+\mathbb{E}_{|y}\left[e^{-r\tau_D^-(\theta)}\tilde{k}(\theta,y-\theta-\underline{X}_{\tau_D^-(\theta)},p);(y-a)\vee(y-\theta-b)<\underline{X}_{\tau_D^-(\theta)}\leq -z\right]\mathbbm{1}_{((y-a)\vee(y-\theta-b)<-z)}.\label{htheta}
\end{align}
\end{Lemma}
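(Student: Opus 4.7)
The plan is to combine the representation \eqref{nowyrepp} with the event identity of Proposition \ref{propDtheta} and then reduce each of the two resulting expectations to quantities computable from the two-sided exit formula \eqref{twosided1}.

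First I would split the analysis along the two scenarios that were already isolated in \eqref{nowyrepp}, namely $\{\underline{X}_{\tau_D^-(\theta)} > -z\}$ and $\{\underline{X}_{\tau_D^-(\theta)} \leq -z\}$, and on each restrict the event $\{\tau_D^-(\theta)<\tau_D^+(a)\wedge\tau_U^+(b)\}$ using Proposition \ref{propDtheta}. On $\{\underline{X}_{\tau_D^-(\theta)} > -z\}$ the condition $\underline{X}_{\tau_D^-(\theta)}\wedge(-z) > y-\theta-b$ from Proposition \ref{propDtheta} collapses to the deterministic inequality $-z > y-\theta-b$, producing the indicator $\mathbbm{1}_{(y-\theta<b-z)}$. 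The remaining constraint reduces to $\underline{X}_{\tau_D^-(\theta)} > (y-a)\vee(-z)$.

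Next, I would invoke spectral negativity: since $X$ has no positive jumps and (under $\mathbb{P}_{|y|z}$) starts at $0$ with $\overline{X}_0\leq y$, the process $X$ hits $y-\theta$ continuously from below, and $D_t=y-X_t$ as long as $\overline{X}_t\leq y$; hence $\tau_D^-(\theta)=\tau^+_{y-\theta}$. The first expectation thus equals
\[
\mathbb{E}\bigl[e^{-r\tau^+_{y-\theta}};\ \tau^+_{y-\theta}<\tau^-_{(y-a)\vee(-z)}\bigr].
\]
Translating $X$ by $-((y-a)\vee(-z))=(a-y)\wedge z$, this is a standard two-sided exit of $X$ started from $(a-y)\wedge z$, exiting above at $y-\theta+(a-y)\wedge z$ before hitting $0$. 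Formula \eqref{twosided1} yields exactly the factor $W^{(r)}((a-y)\wedge z)/W^{(r)}(y-\theta+(a-y)\wedge z)$, giving the first summand in \eqref{htheta}.

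For the second term I would simply substitute the event characterisation from Proposition \ref{propDtheta} on $\{\underline{X}_{\tau_D^-(\theta)}\leq -z\}$: now $\underline{X}_{\tau_D^-(\theta)}\wedge(-z)=\underline{X}_{\tau_D^-(\theta)}$, so the two conditions become $\underline{X}_{\tau_D^-(\theta)}>y-a$ and $\underline{X}_{\tau_D^-(\theta)}>y-\theta-b$, i.e.\ $\underline{X}_{\tau_D^-(\theta)}>(y-a)\vee(y-\theta-b)$, with the range $\bigl((y-a)\vee(y-\theta-b),-z\bigr]$ non-empty only when $(y-a)\vee(y-\theta-b)<-z$, producing the second indicator. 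Because $\underline{X}$ drops strictly below $-z$ on this event, the original lower running minimum $-z$ ceases to influence subsequent evolution, so the initial drawup datum $U_0=z$ drops out of the distribution of $(\tau_D^-(\theta),\underline{X}_{\tau_D^-(\theta)})$, justifying the replacement of $\mathbb{E}_{|y|z}$ by $\mathbb{E}_{|y}$. Putting the two contributions together yields \eqref{htheta}.

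The main obstacle is step three: correctly identifying $\tau_D^-(\theta)$ with the upward first-passage $\tau^+_{y-\theta}$ of $X$ (which relies on $X_0=0$, $\overline{X}_0\vee y=y$ and spectral negativity), then performing the translation of the state space in such a way that the joint event $\{\tau^+_{y-\theta}<\tau^-_{(y-a)\vee(-z)}\}$ matches the hypotheses of \eqref{twosided1}. Everything else is a careful but routine unwinding of events via Proposition \ref{propDtheta}.
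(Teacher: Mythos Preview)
Your proposal is correct and follows essentially the same route as the paper, which simply states that the lemma follows from Proposition~\ref{propDtheta} together with \eqref{nowyrepp}; you have spelled out the details, in particular the identification $\tau_D^-(\theta)=\tau^+_{y-\theta}$ and the reduction of the first expectation to \eqref{twosided1} (equivalently, \eqref{distXD-}). One small remark: your justification for replacing $\mathbb{E}_{|y|z}$ by $\mathbb{E}_{|y}$ in the second term is slightly off---the pair $(\tau_D^-(\theta),\underline{X}_{\tau_D^-(\theta)})$ never depends on $z$ (it involves only $X$ and $y$), so the $z$-dependence has already been fully absorbed into the event constraints via Proposition~\ref{propDtheta}, independently of whether $\underline{X}$ falls below $-z$.
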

Observe that in order to calculate (\ref{h1}) (or \eqref{htheta}) we only need to know the joint distribution of $\underline{X}_{\tau^-_D(\theta)}$ and $\tau_D^-(\theta)$. This can be derived using (\ref{distXD-}).

In order to satisfy \eqref{tau*drugie} we look for $\theta$ that maximizes the function $h(y,z,p,\theta)$. We denote
\begin{align}\label{tetadu*}
\theta^*=\inf\left\{ \theta\in [0,a):\ \frac{\partial}{\partial\theta}h(y,z,p,\theta)=0,\ \text{and }h(y,z,p,\varsigma)\leq h(y,z,p,\theta)\ \textrm{for all $\varsigma\geq 0$}\right\}.
\end{align}
Note that if there is no local maximum of $h$ on $[0,a)$, then $\tau_D^-(\theta)$ is not the optimal stopping time for the problem under study.
\begin{theorem}\label{th2}
Assume that \eqref{war2a} and \eqref{war2b} hold and there exists $\theta^*$ defined in \eqref{tetadu*}.
Then $\tau_D^-(\theta^*)$ with $\theta^*$ given by \eqref{tetadu*} is the optimal stopping rule solution for \eqref{htau} and the value of the contract \eqref{K} equals
$K(y,z,p)=k(y,z,p)+h(y,z,p,\theta^*)$ for $h(y,z,p,\theta^*)$ given in \eqref{h=k} and \eqref{htheta} and $k(y,z,p)$ given in \eqref{fdrawup}.
\end{theorem}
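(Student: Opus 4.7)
I will mirror the argument for Theorem \ref{th1}, invoking Verification Lemma \ref{war} with the bivariate Markov process $\Upsilon_t=(D_t,U_t)$ on $\mathbb{B}=\mathbb{R}_+^2$, lifetime $\tau_0=\tau_D^+(a)\wedge\tau_U^+(b)$, reward $V(y,z)=\tilde{k}(y,z,p)$, candidate stopping time $\tau^{*}=\tau_D^-(\theta^{*})$, and candidate value $v^{*}(y,z)=h(y,z,p,\theta^{*})$. Because the reward $V$ is bounded on the relevant bounded domain $[0,a)\times[0,b)$ and because $\tau_0<\infty$ almost surely for a non-trivial spectrally negative L\'evy process, the integrability assumption \eqref{verlemass} is immediate.

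For condition (i), I observe that $\tau_D^-(y)=0$ when $D_0=y$, so the defining formula \eqref{htau} together with \eqref{h=k} gives $h(y,z,p,y)=\tilde{k}(y,z,p)$; the maximality built into the definition \eqref{tetadu*} then yields $h(y,z,p,\theta^{*})\geq \tilde{k}(y,z,p)=V(y,z)$. For condition (ii) I split the state space according to the continuation region $\{y>\theta^{*}\}$ and the stopping region $\{y\leq\theta^{*}\}$. In the continuation region, the strong Markov property of $(D_t,U_t)$ applied to the representation \eqref{h1}, exactly as in the drawdown-only proof, makes
\begin{equation*}
e^{-r(t\wedge\tau_0\wedge\tau_D^-(\theta^{*}))}\, h\bigl(D_{t\wedge\tau_0\wedge\tau_D^-(\theta^{*})},\,U_{t\wedge\tau_0\wedge\tau_D^-(\theta^{*})},\,p,\,\theta^{*}\bigr)
\end{equation*}
a martingale, and hence $(\mathcal{A}_{(D,U)}-r)h=0$ on $\{y>\theta^{*}\}$ for the full generator $\mathcal{A}_{(D,U)}$ of $(D_t,U_t)$. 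On the stopping region $h$ coincides with $\tilde{k}$ by \eqref{h=k}; the crucial observation is that since $\nu$ and $\lambda$ of Corollary \ref{lambdanu} are conditional Laplace transforms evaluated at $\tau_0$, the processes $e^{-r(t\wedge\tau_0)}\nu(D_{t\wedge\tau_0},U_{t\wedge\tau_0})$ and $e^{-r(t\wedge\tau_0)}\lambda(D_{t\wedge\tau_0},U_{t\wedge\tau_0})$ are genuine martingales, so $(\mathcal{A}_{(D,U)}-r)\nu=(\mathcal{A}_{(D,U)}-r)\lambda=0$. Combined with the decomposition \eqref{fdrawup} this yields
\begin{equation*}
(\mathcal{A}_{(D,U)}-r)\tilde{k}(y,z,p)=-r\Bigl(\tfrac{p}{r}-c\Bigr)\leq 0,
\end{equation*}
where the inequality is precisely \eqref{war2a}--\eqref{war2b}. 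The two regions glue to the desired supermartingale property using the right-continuity of the paths of $(D_t,U_t)$, completing the application of Lemma \ref{war} and so delivering $K(y,z,p)=k(y,z,p)+h(y,z,p,\theta^{*})$.

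The principal obstacle is the intrinsic case structure of the representation \eqref{htheta}, governed by the indicators $\mathbbm{1}_{(y-\theta<b-z)}$ and $\mathbbm{1}_{((y-a)\vee(y-\theta-b)<-z)}$: across the loci $y-\theta^{*}=b-z$ and $y-\theta^{*}-b=-z$ the candidate $h(\cdot,\cdot,p,\theta^{*})$ switches functional form, and one must verify that it is sufficiently regular (right-continuity, and the appropriate continuous/smooth pasting dictated by the spectral negativity of $X$) for the relation $(\mathcal{A}_{(D,U)}-r)h=0$ to hold throughout the continuation region, and that the supermartingale property is preserved across these interfaces. In the Brownian model of \cite{olympia} these pasting conditions are supplied automatically by a free-boundary PDE; in the present L\'evy setting they must be read off directly from the probabilistic representations in Corollary \ref{lambdanu} and Lemma \ref{abovelemma}, after inverting the Laplace identity \eqref{m2}.
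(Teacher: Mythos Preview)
Your proposal follows essentially the same route as the paper: apply Verification Lemma \ref{war} to $\Upsilon_t=(D_t,U_t)$ with $\tau_0=\tau_D^+(a)\wedge\tau_U^+(b)$ and $V=\tilde{k}$, verify condition (i) by taking $\theta=y$ and invoking the maximality of $\theta^{*}$, and verify condition (ii) by showing the stopped candidate is a martingale in the continuation region (via the strong Markov property) and that $(\mathcal{A}_{(D,U)}-r)\tilde{k}=-r(p/r-c)\leq 0$ on the stopping region using the martingale property of $e^{-rt}\nu(D_t,U_t)$ and $e^{-rt}\lambda(D_t,U_t)$. The regularity/pasting concern you flag in your final paragraph is legitimate, but the paper's own proof proceeds at exactly the same level of formality and does not address it either.
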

\begin{proof}
The optimization problem we are dealing with here is defined in \eqref{fh}.
We will again use Verification Lemma \ref{war} for $\Upsilon_t=(D_t, U_t)$, $\mathbb{B}=\mathbb{R}_+\times\mathbb{R}_+$,
$\tau_0=\tau_U^+(b)\wedge\tau_D^+(a)$, $V(\phi)=\tilde{k}(y,z,p)$ with $\phi=(y,z)$.
The proof is similar to the proof of Theorem \ref{th1}.
The proof of condition (i) of Verification Lemma \ref{war} follows in fact the same pattern with $\theta=y$ at the first step.

Now we will prove the second condition of Verification Lemma \ref{war}. Let $\delta:=\tau_D^+(a)\wedge\tau_U^+(b)\wedge\tau_D^-(\theta^*)$. Note that, using the strong Markov property from the definition of the function $h(y,z,p, \theta^*)$ given in \eqref{h1} and \eqref{htau} for $\Upsilon_t=(D_t, U_t)$, we obtain
\begin{align}
&\mathbb{E}\left[h(D_{\delta},U_{\delta},p,\theta^*)|\ \mathcal{F}_{t\wedge\delta}\right]=\mathbb{E}\left[\mathbb{E}_{|D_{\delta}|U_{\delta}}\left[e^{-r\tau_D^-(\theta^*)}\tilde{k}(D_{\tau_D^-(\theta^*)},U_{\tau_D^-(\theta^*)},p)\right]\big|\ \mathcal{F}_{t\wedge\delta}\right]\nonumber\\
&=\mathbb{E}\left[e^{-r(t\wedge\delta)}\mathbb{E}_{|D_{t\wedge\delta}|U_{t\wedge\delta}}\left[e^{-r\tau_D^-(\theta^*)}\tilde{k}(D_{\tau_D^-(\theta^*)},U_{\tau_D^-(\theta^*)},p)\right]\big|\ \mathcal{F}_{t\wedge\delta}\right]=e^{-r(t\wedge\delta)}h(D_{t\wedge\delta},U_{t\wedge\delta},p,\theta^*).\nonumber
\end{align}
Thus, we can conclude that the process
$$e^{-r(t\wedge\tau_D^+(a)\wedge\tau_U^+(b)\wedge\tau_D^-(\theta^*))}h(D_{t\wedge\tau_D^+(a)\wedge\tau_U^+(b)\wedge\tau_D^-(\theta^*)},U_{t\wedge\tau_D^+(a)\wedge\tau_U^+(b)\wedge\tau_D^-(\theta^*)},p,\theta^*)$$
is a martingale. Hence, for $y>\theta^*$,
\[\mathcal{A}_{(D,U)} h(y, z, p,\theta^*)-rh(y, z, p,\theta^*)=0,\]
where $\mathcal{A}_{(D,U)}$ is the full generator of the process $(D_t, U_t)$.
Moreover, the processes
$$e^{-r(t\wedge\tau_D^+(a))}\nu(D_{t\wedge\tau_D^+(a)},U_{t\wedge\tau_D^+(a)}),\qquad e^{-r(t\wedge\tau_U^+(b))}\lambda(D_{t\wedge\tau_U^+(b)},U_{t\wedge\tau_U^+(b)})$$
are also $\mathcal{F}_t$-martingales. Thus
$\mathcal{A}_{(D,U)}\nu(y,z)-r\nu(y,z)=0$ and $\mathcal{A}_{(D,U)}\lambda(y,z)-r\lambda(y,z)=0$.
Summing up, by (\ref{war2a}), \eqref{war2b} and (\ref{h=k}) for $y\in(0,\theta^*)$ and $z\in(0,b)$ we have
\begin{eqnarray}
\mathcal{A}_{(D,U)}h(y,z,p,\theta^*)-rh(y,z,p,\theta^*)=
\mathcal{A}_{(D,U)}\tilde{k}(y,z,p)-r\tilde{k}(y,z,p)=
-r\left(\frac{p}{r}-c\right)\leq 0.\nonumber
\end{eqnarray}
This completes the proof.
\end{proof}

\subsection{A strikingly simple case when $a=b$}\label{a=b}
The case where $a=b$ corresponds to the situation when the contract pays the compensation when
the drawdown process exceeds level $a$ or expires when the drawup reaches $a$.
In this subsection we will find the function $\lambda(y,z)$ and $\nu(y,z)$ appearing in
\eqref{fdrawup} and \eqref{p*drawup} (hence also in \eqref{kdecompose} and \eqref{htheta}).

The simplicity of this case follows from the fact that we can divide the problem into two easy sub-cases.
The first one is when $a\leq z+y$. Then $\lambda$ and $\nu$ can be identified by using the two-sided exit formulas \eqref{twosided1}-\eqref{twosided2}. Indeed,
\begin{align}
\lambda(y,z)&=\mathbb{E}_{|y|z}\left[e^{-r\tau_U^+(a)};\tau_U^+(a)<\tau_D^+(a)\right]=\mathbb{E}\left[e^{-r\tau^+_{a-z}};\tau^+_{a-z}<\tau^-_{y-a}\right]\nonumber\\
&=\frac{W^{(r)}(a-y)}{W^{(r)}(2a-y-z)}\label{lambda1}
\end{align}
and
\begin{align}
\nu(y,z)&=\mathbb{E}_{|y|z}\left[e^{-r\tau_D^+(a)};\tau_D^+(a)<\tau_U^+(a)\right]=\mathbb{E}\left[e^{-r\tau^-_{y-a}};\tau^-_{y-a}<\tau^+_{a-z}\right]\nonumber\\
&=Z^{(r)}(a-y)-Z^{(r)}(2a-y-z)\frac{W^{(r)}(a-y)}{W^{(r)}(2a-y-z)}.\label{nu1}
\end{align}
The second case is when $a>z+y$.
For this case the following identity is crucial:
\begin{align}
\left\{\tau_U^+(a)\in dt,\ \tau_D^+(a)>t,X_t\in dx\right\}=\left\{\tau^+_x\in dt,\underline{X}_t\wedge(-z)\in d(x-a)\right\},\label{usingthis}
\end{align}
which holds for any $x\in(y,a-z]$ (see \cite[Eq. (46)]{olimpia2010} for details).
Using \eqref{usingthis} we observe that
\begin{align}
\lambda(y,z)&=\mathbb{E}_{|y|z}\left[e^{-r\tau^+_U(a)};\tau^+_U(a)<\tau_D^+(a)\right]\nonumber\\
&=\int_0^\infty\int_0^\infty e^{-rt}\mathbb{P}_{|y|z}\left(\tau^+_U(a)\in \diff t;\tau_D^+(a)>t;X_t\in \diff x\right)\nonumber\\
&=\int_0^\infty\int_0^\infty e^{-rt}\mathbb{P}\left(\tau^+_x\in \diff t;\underline{X}_{\tau_x^+}\wedge(-z)\in \diff (x-a),x\in(y,a-z)\right)\nonumber\\
&=\int_y^{a-z}\mathbb{E}\left[e^{-r\tau^+_x};\underline{X}_{\tau_x^+}\wedge(-z)\in \diff x-a\right]\nonumber\\
&=\int_y^{a-z}\mathbb{E}\left[e^{-r\tau^+_x};\underline{X}_{\tau_x^+}\in \diff (x-a)\right]+\mathbb{E}\left[e^{-r\tau^+_{a-z}};\underline{X}_{\tau^+_{a-z}}>-z\right]\nonumber\\
&=\int_y^{a-z}\frac{\partial}{\partial a}\frac{W^{(r)}(a-x)}{W^{(r)}(a)}\diff x+\mathbb{E}\left[e^{-r\tau^+_{a-z}};\underline{X}_{\tau^+_{a-z}}>-z\right]\nonumber\\
&=\frac{W^{(r)}(a-y)}{W^{(r)}(a)}-\frac{1}{r}\frac{W^{\prime (r)}(a)}{(W^{(r)}(a))^2}\left(Z^{(r)}(a-y)-Z^{(r)}(z)\right),\label{lambda}
\end{align}
where we have used the fact that
\begin{align}
\mathbb{E}\left[e^{-r\tau^+_x};\underline{X}_{\tau^+_x}\in -du\right]=\frac{\partial}{\partial u}\frac{W^{(r)}(u)}{W^{(r)}(x+u)}(-du).
\end{align}
The function $\nu$ can be calculated by using the formula for the Laplace transform $\xi$ of $\tau_D^+(a)$ given in (\ref{ksi})
and the above expression for $\lambda$. Indeed,
\begin{align}
\nu(y,z)&=\mathbb{E}_{|y|z}\left[e^{-r\tau^+_D(a)};\tau^+_D(a)<\tau^+_U(a)\right]\nonumber\\
&=\mathbb{E}_{|y|z}\left[e^{-r\tau^+_D(a)}\right]-\mathbb{E}_{|y|z}\left[e^{-r\tau^+_D(a)};\tau^+_U(a)<\tau^+_D(a)\right]\nonumber\\
&=\mathbb{E}_{|y|}\left[e^{-r\tau^+_D(a)}\right]-\mathbb{E}_{|y|z}\left[e^{-r\tau^+_U(a)};\tau^+_U(a)<\tau^+_D(a)\right]\mathbb{E}\left[e^{-r\tau_D^+(a)}\right]\nonumber\\
&=Z^{(r)}(a-y)-rW^{(r)}(a-y)\frac{W^{(r)}(a)}{W^{\prime (r)}(a)}-\lambda(y,z)\left(Z^{(r)}(a)-rW^{(r)}(a)\frac{W^{(r)}(a)}{W^{\prime (r)}(a)}\right)\nonumber\\
&=Z^{(r)}(z)-Z^{(r)}(a)\frac{W^{(r)}(a-y)}{W^{(r)}(a)}+\frac{1}{r}Z^{(r)}(a)\frac{W^{\prime (r)}(a)}{(W^{(r)}(a))^2}\left(Z^{(r)}(a-y)-Z^{(r)}(z)\right).\label{nu}
\end{align}
Identification of $\nu$ and $\lambda$ allows us also to calculate the function $h$ appearing in the value function \eqref{K}
given in Theorem \ref{th2}.
Precisely, by \eqref{h=k} for $\theta\leq y$ we have
\begin{align}
h(y,z,p,\theta)=\tilde{k}(y,z,p),\nonumber
\end{align}
where $\tilde{k}$ can be identified using (\ref{fdrawup}) and \eqref{ktilde}.
For $\theta>y$ we have
\begin{align}
h(y,z,p,\theta)=&\mathbb{E}_{|y}\left[e^{-r\tau_D^-(\theta)}\tilde{k}(\theta,y-\theta-\underline{X}_{\tau_D^-(\theta)});\ y-a<\underline{X}_{\tau_D^-(\theta)}<-z\right]\nonumber\\
&+\tilde{k}(\theta,y+z-\theta)\mathbb{E}_{|y}\left[e^{-r\tau_D^-(\theta)};\ \underline{X}_{\tau_D^-(\theta)}>(y-a)\vee (-z)\right]\mathbbm{1}_{(a>y+z-\theta)}\nonumber\\
=&\int_z^{a-y}\tilde{k}(\theta,y-\theta+\phi)\frac{\partial}{\partial\phi}\frac{W^{(r)}(\phi)}{W^{(r)}(y-\theta+\phi)}\diff\phi\mathbbm{1}_{(a>y+z)}\nonumber\\
&+\tilde{k}(\theta,y+z-\theta)\frac{W^{(r)}((a-y)\wedge z)}{W^{(r)}((a-y)\wedge z +y-\theta)}\mathbbm{1}_{(a>y+z-\theta)}.\nonumber
\end{align}
Now, by integration by parts we have
\begin{align}
&h(y,z,p,\theta)=\tilde{k}(\theta,a-\theta)\frac{W^{(r)}(a-y)}{W^{(r)}(a-\theta)}\mathbbm{1}_{(a>y+z)}-\tilde{k}(\theta,y+z-\theta)\frac{W^{(r)}(z)}{W^{(r)}(y+z-\theta)}\mathbbm{1}_{(a>y+z)}\nonumber\\
&\qquad+\tilde{k}(\theta,y+z-\theta)\frac{W^{(r)}((a-y)\wedge z)}{W^{(r)}((a-y)\wedge z+y-\theta)}\mathbbm{1}_{(a>y+z-\theta)}\nonumber\\
&\qquad-\int_z^{a-y}\left[-\left(\frac{p}{r}+\alpha\right)\left(1-\frac{1}{r}Z^{(r)}\frac{W^{\prime (r)}(a)}{(W^{(r)}(a))^2}\right)-\frac{p}{r}\frac{1}{r}\frac{W^{\prime (r)}(a)}{(W^{(r)}(a))^2}\right]rW^{(r)}(\phi)\diff\phi\mathbbm{1}_{(a>y+z)}\nonumber\\
=&\left\{ \begin{array}{l}
\left[-\left(\frac{p}{r}+\alpha\right)\left(1-\frac{1}{r}Z^{(r)}\frac{W^{\prime (r)}(a)}{(W^{(r)}(a))^2}\right)-\frac{p}{r}\frac{1}{r}\frac{W^{\prime (r)}(a)}{(W^{(r)}(a))^2}\right]\left(Z^{(r)}(a-y)-Z^{(r)}(z)\right)\\
\begin{array}{ll}
\qquad +\tilde{k}(\theta,a-\theta)\frac{W^{(r)}(a-y)}{W^{(r)}(a-\theta)} &\qquad\qquad\qquad\qquad\qquad\qquad\textrm{for }a>y+z,\\
\tilde{k}(\theta,y+z-\theta)\frac{W^{(r)}(a-y)}{W^{(r)}(a-\theta)}&\qquad\qquad\qquad\qquad\qquad\qquad\textrm{for }y+z-\theta<a<y+z,\\
0&\qquad\qquad\qquad\qquad\qquad\qquad\textrm{for }a<y+z-\theta.\label{ha=b}
\end{array}
\end{array}\right.
\end{align}

\section{Numerical analysis}\label{sec:examples}
In this section we analyze numerically all the insurance contracts under study and
check their dependence on selected parameters of the model.
We focus on two classical spectrally negative risk L\'evy processes. The first one is the linear Brownian motion:
\begin{equation}
X_t=\mu t+\sigma B_t,\label{linearBrownian}
\end{equation}
where $B_t$ is a standard Brownian motion. The second process we analyze is the classical Cram\'er-Lundberg process with exponential jumps:
\begin{equation}\label{CL}
X_t=\mu t-\sum_{i=1}^{N_t}{\eta_i},\end{equation}
where $\eta_i$ are i.i.d. exponentially distributed random variables with parameter $\rho >0$ and $N_t$
is an independent Poisson process with intensity $\beta>0$.

We express all the main quantities and all contract values in terms of the scale functions defined in \eqref{Wq} and \eqref{Zq}.
From \cite{kyprianoua} it follows that  the scale functions for the Brownian motion with drift \eqref{linearBrownian} take the following form:
\begin{align}
W^{(r)}(u) =&\frac{2}{\sigma^2\Xi}e^{-\frac{\mu}{\sigma^2}u}\sinh \left(\Xi u\right),\label{bmW}\\
Z^{(r)}(u)=&e^{-\frac{\mu}{\sigma^2}u}\left(\cosh\left(\Xi u\right)+\frac{\mu}{\Xi\sigma^2}\sinh\left(\Xi u\right)\right),\label{bmZ}
\end{align}
where
\begin{align}
\Xi=\frac{\sqrt{\mu^2+2r\sigma^2}}{\sigma^2}.\nonumber
\end{align}
Similarly, for the Cram\'er-Lundberg process \eqref{CL},
\begin{align}
W^{(r)}(u) =&\frac{e^{\Phi(r) u}}{\psi^\prime(\Phi(r))}+\frac{e^{\zeta u}}{\psi^\prime(\zeta)}, \label{clW}\\
Z^{(r)}(u)=&1+r\frac{e^{\Phi(r) u}-1}{\Phi(r) \psi^\prime (\Phi(r) )}+r\frac{e^{\zeta u}-1}{\zeta\psi^\prime(\zeta)}, \label{clZ}
\end{align}
where
\begin{align}
&\Phi(r) =\frac{1}{2\mu}\left ((\beta+r-\mu\rho)+\sqrt{(\beta+r-\mu\rho)^2+4r\mu\rho}\right ),\nonumber\\
&\zeta =\frac{1}{2\mu}\left ((\beta+r-\mu\rho)-\sqrt{(\beta+r-\mu\rho)^2+4q\mu\rho}\right ),\nonumber\\
&\psi^\prime(\phi)=\mu - \frac{\beta\rho}{(\rho+\phi)^2}\nonumber
\end{align}
and $\psi(\cdot)$ is the Laplace exponent given in \eqref{psi}.

In this section we analyze the influence of selected parameters of our model
on the prices and the stopping rules of the insurance contracts under consideration.
To simplify the above comparison, we order the numerical analysis according to the order of appearance of these contracts in this paper.

\subsection{Fair premium for drawdown insurance}
We start from the contract \eqref{f} using \eqref{dodane1}.
Let $X_t$ be a linear Brownian motion \eqref{linearBrownian}.
From Proposition \ref{Ksi} we have
\begin{eqnarray*}
\xi (y) = e^{-\frac{\mu}{\sigma^2}(a-y)}\frac{\Xi\cosh(\Xi y)-\frac{\mu}{\sigma^2}\sinh(\Xi y)}{\Xi\cosh(\Xi a)-\frac{\mu}{\sigma^2}\sinh(\Xi a)}.
\end{eqnarray*}
This leads to a formula for the value function $f(y,p)$ given in \eqref{f} and the expression for the fair premium $p^*$ given in \eqref{p*}:
\begin{align}
f(y,p)=\left(\frac{p}{r}+\alpha\right)e^{-\frac{\mu}{\sigma^2}(a-y)}\frac{\Xi\cosh(\Xi y)-\frac{\mu}{\sigma^2}\sinh(\Xi y)}{\Xi\cosh(\Xi a)-\frac{\mu}{\sigma^2}\sinh(\Xi a)}-\frac{p}{r},\nonumber\\
p^*=\frac{r\alpha(\Xi\cosh(\Xi y)-\frac{\mu}{\sigma^2}\sinh(\Xi y))}{\Xi(\cosh(\Xi a)-\cosh(\Xi y))-\frac{\mu}{\sigma^2}(\sinh(\Xi a)-\sinh(\Xi y))}.\nonumber
\end{align}
In Figure \ref{p*bm} we depict the fair premium $p^*$ depending on the starting drawdown position $D_0=y$.
\begin{figure}[!ht]
\center{
\includegraphics[width=0.4\textwidth]{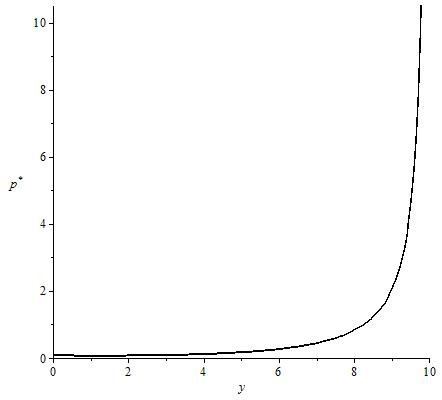}
\caption{\footnotesize{The value $p^*$ for the drawdown insurance contract for Brownian motion with drift.
Parameters: $r=0.01, \mu=0.03, \sigma=0.4, \alpha=100, a=10$.}}\label{p*bm}}
\end{figure}
\\
\indent Similar calculations are done for the Cram\'er-Lundberg process given in \eqref{CL}. In particular, we have
\begin{align}
\xi (y) =& 1+r\frac{e^{\Phi(r) (a-y)}-1}{\Phi(r)\psi^\prime(\Phi(r) )}+r\frac{e^{\zeta (a-y)}-1}{\zeta\psi^\prime(\zeta)}-r\left ( \frac{e^{\Phi(r) (a-y)}}{\psi^\prime(\Phi(r) )}+\frac{e^{\zeta (a-y)}}{\psi^\prime(\zeta)}\right ) \nonumber \\
&\cdot\frac{\psi^\prime(\zeta)e^{\Phi(r) a}+\psi^\prime(\Phi(r) )e^{\zeta a}}{\Phi(r) \psi^\prime(\zeta)e^{\Phi(r) a}+\zeta\psi^\prime(\Phi(r) )e^{\zeta a}}=: c_0 + c_{\Phi(r)}e^{\Phi(r) (r)(a-y)}+c_{-\zeta}e^{-\zeta (a-y)},\nonumber
\end{align}
where
\begin{align}
c_0=&1-\frac{r}{\Phi(r) \psi^\prime(\Phi(r) )}-\frac{r}{\zeta\psi^\prime(\zeta)},\nonumber\\
c_{\Phi(r)}=&\frac{r}{\Phi(r) \psi^\prime(\Phi(r) )}-\frac{r}{\psi^\prime(\Phi(r) )}\frac{\psi^\prime(\zeta)e^{\Phi(r) a}+\psi^\prime(\Phi(r) )e^{\zeta a}}{\Phi(r) \psi^\prime(\zeta)e^{\Phi(r) a}+\zeta\psi^\prime(\Phi(r) )e^{\zeta a}},\nonumber\\
c_{\zeta}=&\frac{r}{\zeta\psi^\prime(\zeta)}-\frac{r}{\psi^\prime(\zeta)}\frac{\psi^\prime(\zeta)e^{\Phi(r) a}+\psi^\prime(\Phi(r) )e^{\zeta a}}{\Phi(r) (r)\psi^\prime(\zeta)e^{\Phi(r) a}+\zeta\psi^\prime(\Phi(r) )e^{\zeta a}}.\nonumber
\end{align}
The contract value $f(y,p)$ given in \eqref{f} and the fair premium $p^*$ given in \eqref{p*} are
\begin{align}
f(y,p)&=\left(\frac{p}{r}+\alpha\right)\left(c_0 + c_{\Phi(r)}e^{\Phi(r) (a-y)}+c_{\zeta}e^{\zeta (a-y)}\right)-\frac{p}{r},\nonumber\\
p^*&=\frac{r\alpha(c_0 + c_{\Phi(r)}e^{\Phi(r) (a-y)}+c_{\zeta}e^{\zeta (a-y)})}{1-c_0 - c_{\Phi(r)}e^{\Phi(r) (a-y)}-c_{\zeta}e^{\zeta (a-y)}}.\nonumber
\end{align}
Figure \ref{p*cl} describes the dependence of the fair premium $p^*$ on the starting drawdown $D_0=y$ in this case.
\begin{figure}[!ht]
\center{
\includegraphics[width=0.4\textwidth]{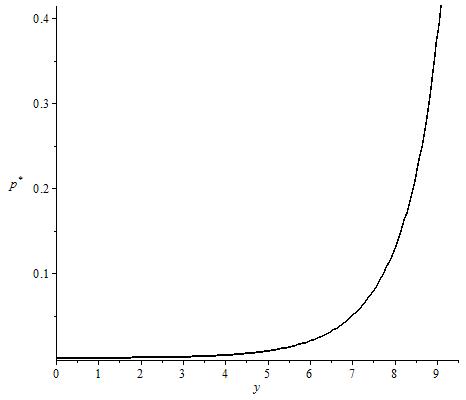}
\caption{\footnotesize{The value $p^*$ for the drawdown insurance contract for the Cram\'er--Lundberg model.
Parameters: $r=0.01, \mu=0.05, \beta=0.1, \rho=2.5, \alpha=100, a=10$.}}\label{p*cl}}
\end{figure}

Note that for both, the linear Brownian motion \eqref{linearBrownian} and
the Cram\'er-Lundberg model \eqref{CL}, the shapes of the fair premium are
quite similar. Moreover, increasing the starting drawdown
$D_0=y$ may rapidly increase the value $p^*$.
In fact, for the Brownian motion with drift, the value of $p^*$ tends to $\infty$ as $y\uparrow a$.
This follows from the fact that for the linear Brownian motion we have
 $W^{(r)}(0)=0$ and the denominator in the expression for the fair premium $p^*$ goes to $0$ as
$y\uparrow a$.
For the Cram\'er--Lundberg process \eqref{CL} this is not the case, though.
Indeed, for this process of bounded variation we have $W^{(r)}(0)>0$ and $\lim\limits_{y\rightarrow a^-}\xi(y)\neq 1$, so
the denominator in the formula for $p^*$ does not converge to $0$ as $y\uparrow a$.

\subsection{Cancellable drawdown insurance}
We now price the contract $F(y,p)=f(y,p) +g(y,p,\theta^*)$ defined in \eqref{F} and identified in Theorem \ref{th1}.
We also calculate the optimal stopping rule
$\tau^*$ given in \eqref{tau*pierwsze} for $\theta^*$ defined in \eqref{teta*}.
Thanks to the numerical results presented in the previous subsection,
it suffices to find the function $g(y,p,\theta^*)$ and $\theta^*$.

For the linear Brownian motion model \eqref{linearBrownian} we can write an explicitly formula for $g$.
In particular, for $\theta<y$,
\begin{align}
&g(y,p,\theta)=\tilde{f}(\theta,p)\frac{W^{(r)}(a-y)}{W^{(r)}(a-\theta)}=\left(\frac{p}{r}-c\right)\frac{e^{-\frac{\mu}{\sigma^2}(a-y)}\sinh(\Xi(a-y))}{e^{-\frac{\mu}{\sigma^2}(a-\theta)}\sinh(\Xi(a-\theta))}\nonumber\\
&\qquad-\left(\frac{p}{r}+\alpha\right)\frac{e^{-\frac{\mu}{\sigma^2}(a-y)}\sinh(\Xi (a-y))\left(\Xi\cosh(\Xi\theta)-\frac{\mu}{\sigma^2}\sinh(\Xi\theta)\right)}{\sinh(\Xi(a-\theta))\left(\Xi\cosh(\Xi k)-\frac{\mu}{\sigma^2}\sinh(\Xi k)\right)},\nonumber
\end{align}
and for $\theta\geq y$ we have
\begin{align}
g(y,p,\theta)=\tilde{f}(y,p)=-\left(\frac{p}{r}+\alpha\right)e^{-\frac{\mu}{\sigma^2}(a-y)}\frac{\Xi\cosh(\Xi y)-\frac{\mu}{\sigma^2}\sinh(\Xi y)}{\Xi\cosh(\Xi a)-\frac{\mu}{\sigma^2}\sinh(\Xi a)}+\frac{p}{r}-c.
\end{align}
\begin{figure}[!ht]
\center{
\subfloat{\includegraphics[width=0.45\textwidth]{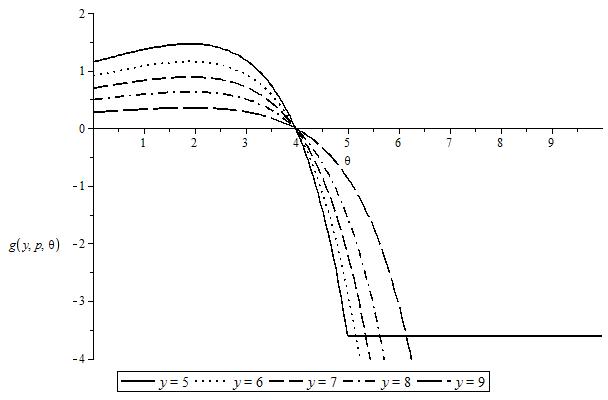}}
\quad
\subfloat{\includegraphics[width=0.45\textwidth]{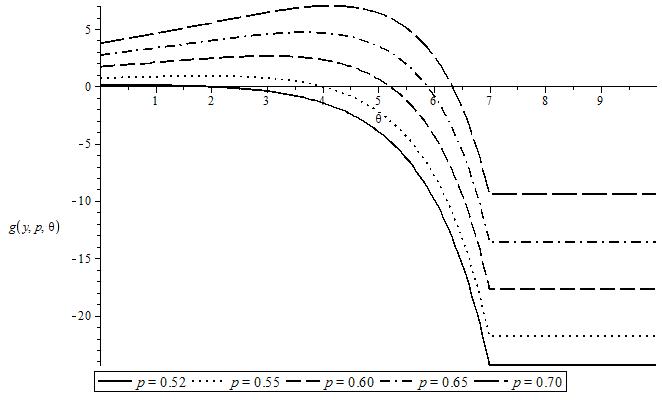}}
\caption{\footnotesize{The value $g(y,p,\theta)$ for Brownian motion with drift in dependence on $\theta$
for different levels of $y$ (left) and $p$ (right).
Parameters: $r=0.01, \mu=0.03, \sigma=0.4, \alpha=100, c=50, a=10, p=0.55, y=7$.}}\label{gbm}}
\end{figure}

Figure \ref{gbm} depicts the dependence of $g(y,p,\theta)$ on $\theta$.
The straight line piece at the end of the
graph follows from the fact that by \eqref{g=f} the function $g$ is constantly equal to $\tilde{f}(y,p)$ for $\theta\geq y$.
We are looking for $\theta^*$ that maximizes $g$. If $\theta^*$ is at the beginning
of the straight line piece, the investor should not take this insurance contract.
This is an extreme case. In fact, most natural are the contracts when $\theta^*$ is between zero and the beginning of the straight line piece.
Note also that by \eqref{g} the and definition of the optimal $\theta^*$ given in \eqref{teta*}
the level $\theta^*$ does not depend on the starting position of the drawdown $D_0=y$ as long as
$y>\theta^*$. In particular, for our set of parameters, for a constant premium, we have $\theta^*\approx 2$.

Figure \ref{gbm} also shows the dependence of $g(y,p,\theta )$ on the premium rate $p$.
Note that higher premium rate produces higher values of $g$, hence also higher values of the
insurance contract $F$ given in \eqref{F}.

For the Cram\'er--Lundberg model \eqref{CL} the function $g$ takes the form
\begin{align}
g(y,p,\theta)=&\tilde{f}(\theta,p)\frac{W^{(r)}(a-y)}{W^{(r)}(a-\theta)}=\frac{\psi^\prime(\zeta)e^{\Phi(r)(a-y)}+\psi^\prime(\Phi(r))e^{\zeta(a-y)}}{\psi^\prime(\zeta)e^{\Phi(r)(a-\theta)}+\psi^\prime(\Phi(r))e^{\zeta(a-\theta)}}\nonumber\\
&\cdot\left[-\left(\frac{p}{r}+\alpha\right)\left(c_0+c_\Phi(r) e^{\Phi(r)(a-\theta)}+c_{\zeta}e^{\zeta(a-\theta)}\right)+\left(\frac{p}{r}-c\right)\right]\nonumber
\end{align}
for $\theta<y$,
and
\begin{align}
g(y,p,\theta)=&\tilde{f}(y,p)=-\left(\frac{p}{r}+\alpha\right)\left(c_0 + c_{\Phi(r)}e^{\Phi(r) (a-y)}+c_{\zeta}e^{\zeta (a-y)}\right)+\frac{p}{r}-c\nonumber
\end{align}
for $\theta\leq y$.
\begin{figure}[!ht]
\center{
\subfloat{\includegraphics[width=0.45\textwidth]{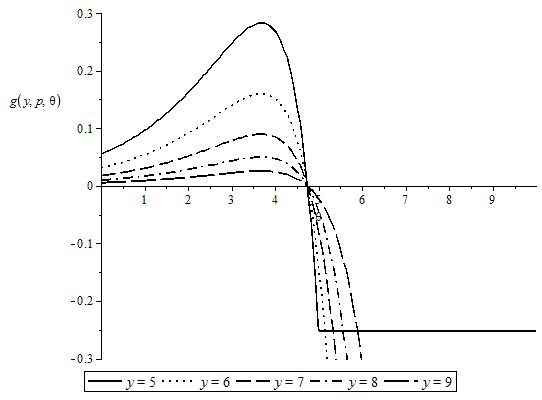}}
\quad
\subfloat{\includegraphics[width=0.45\textwidth]{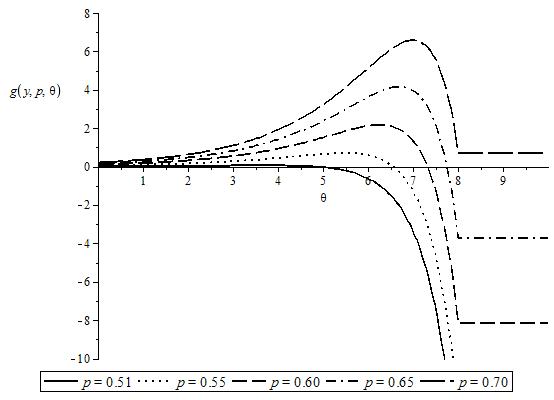}}
\caption{\footnotesize{The value $g(y,p,\theta)$ for the Cram\'er--Lundberg model in dependence on $\theta$ for different levels of $y$ (left) and $p$ (right).
Parameters: $r=0.01, \mu=0.05, \beta=0.1, \rho=2.5, a=10, \alpha=100, c=50, p=0.51, y=8$.}}\label{gcl}}
\end{figure}

Figure \ref{gcl} describes the behavior of $g(y,p,\theta )$ in dependence on the stopping level $\theta$,
hence identifying also the optimal one.

\subsection{Fair premium for a drawup contingency when $a>b$}
We will now investigate numerically the insurance contract
\eqref{kyzp} which provides protection from any specified drawdown of size $a$
with a certain drawup contingency of size $b$.
By Theorem \ref{thm2} it suffices to calculate the functions $\lambda$ and $\nu$ in order to identify the contract.
As mentioned in Section \ref{sec:drawup}, to do so, we will calculate these functions using
Corollary \ref{lambdanu} and numerically inverting the Laplace transform \eqref{m2}.

It is worth mentioning that in the case of linear Brownian motion \eqref{linearBrownian} there exists an alternative
(to inverting the Laplace transform) method of identifying $\lambda$ and $\nu$.
From Corollary \ref{lambdanu} the formulas for $\nu$ and $\lambda$ reduce to \eqref{twosided1}-\eqref{twosided2}. That is, for $a\leq y+z$,
\begin{align}
\nu(y,z)=Z^{(r)}(a-y)-Z^{(r)}(a+b-y-z)\frac{W^{(r)}(a-y)}{W^{(r)}(a+b-y-z)}\nonumber
\end{align}
and
\begin{align}
\lambda(y,z)=\frac{W^{(r)}(a-y)}{W^{(r)}(a+b-y-z)}.\nonumber
\end{align}
In order to identify $\lambda$ and $\nu$
when $a>y+z$, one can observe that $\widehat{X}_t:=-X_t$
is a linear Brownian motion with drift $-\mu$.
Then also $\widehat{U}_t=D_t$ and $\widehat{D}_t=U_t$.
Replacing $X$ by $\widehat{X}$ we can use the Laplace transform of $\widehat{D}$
given in \cite{Mijatovic1} and calculate the exact formulas for $\nu$ and $\lambda$.
Precisely, let $\widehat{W}^{(r)}$ and $\widehat{Z}^{(r)}$ be the scale functions for $\widehat{X}$ defined in \eqref{Wq} and \eqref{Zq}.
Then, for $a>y+z$,
\begin{align}
&\nu(y,z)=\frac{\widehat{W}^{(r)}(b)}{\widehat{W}^{\prime (r)}(b)}\frac{\sigma^2}{2}\left[\frac{(\widehat{W}^{\prime (r)}(b))^2}{\widehat{W}^{(r)}(b)}-\widehat{W}^{\prime (r)}(b)\right]e^{-(a-b\vee (y+z))\frac{\widehat{W}^{\prime (r)}(b)}{\widehat{W}^{(r)}(b)}}Z^{(r)}(b)\nonumber\\
&\qquad -\frac{1}{r}\frac{1}{\widehat{W}^{(r)}(b)}\frac{\sigma^2}{2}\left[\frac{(\widehat{W}^{\prime (r)}(b))^2}{\widehat{W}^{(r)}(b)}-\widehat{W}^{\prime (r)}(b)\right]\left(\widehat{Z}^{(r)}(b-z)-\widehat{Z}^{(r)}(y)\right)e^{-(a-b)\frac{\widehat{W}^{\prime (r)}(b)}{\widehat{W}^{(r)}(b)}}Z^{(r)}(b)\mathbbm{1}_{(b>y+z)}\nonumber
\end{align}
and
\begin{align}
&\lambda(y,z)=\frac{1}{r}\frac{1}{\widehat{W}^{(r)}(b)}\frac{\sigma^2}{2}\left[\frac{(\widehat{W}^{\prime (r)}(b))^2}{\widehat{W}^{(r)}(b)}-\widehat{W}^{\prime (r)}(b)\right]\left(\widehat{Z}^{(r)}(b\wedge(a-z))-\widehat{Z}^{(r)}(y)\right)e^{-(a-b)\frac{\widehat{W}^{\prime (r)}(b)}{\widehat{W}^{(r)}(b)}}\mathbbm{1}_{(b>y)}\nonumber\\
&\qquad+\frac{\widehat{W}^{(r)}(b)}{\widehat{W}^{\prime (r)}(b)}\frac{\sigma^2}{2}\left[\frac{(\widehat{W}^{\prime (r)}(b))^2}{\widehat{W}^{(r)}(b)}-\widehat{W}^{\prime (r)}(b)\right]\left(e^{-z\frac{\widehat{W}^{\prime (r)}(b)}{\widehat{W}^{(r)}(b)}}-e^{-(a-b\vee y)\frac{\widehat{W}^{\prime (r)}(b)}{\widehat{W}^{(r)}(b)}}\right)\mathbbm{1}_{(a>z+b)}+\frac{W^{(r)}(z)}{W^{(r)}(b)}.\nonumber
\end{align}

Using the above expressions we can find the fair premium $p^*$ defined in \eqref{p*drawup}:
\begin{align}
p^*=\frac{r\alpha\nu (y,z)}{1-\lambda(y,z)-\nu(y,z)},\nonumber
\end{align}
and analyze the influence of the starting position of the drawdown and drawup processes
on $p^*$. Figure \ref{P*bm} shows this relation.

\begin{figure}[!ht]
\center{
\subfloat{\includegraphics[width=0.4\textwidth]{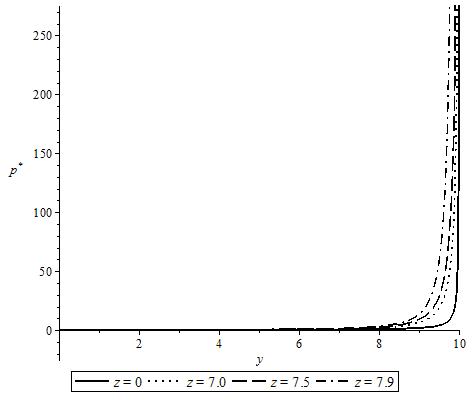}}
\quad
\subfloat{\includegraphics[width=0.4\textwidth]{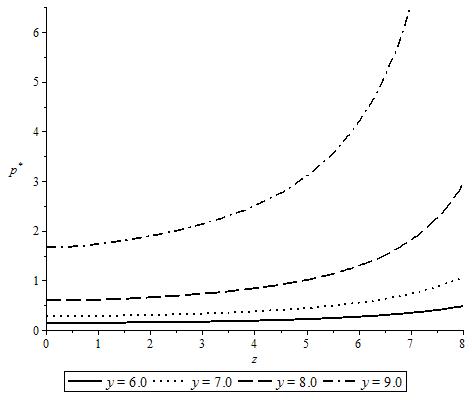}}
\caption{\footnotesize{The value $p^*$ for the drawup contingency contract for Brownian motion with drift for different starting positions of drawup $z$ (left) and drawdown $y$ (right). Parameters: $r=0.01, \mu=0.03, \sigma=0.4, \alpha=100, a=10, b=8$.}}\label{P*bm}}
\end{figure}

From Figure \ref{P*bm} we can deduce the same observation as for the basic drawdown contract.
For the linear Brownian motion \eqref{linearBrownian}
the value of $p^*$ tends to $\infty$ as $y\uparrow a$. This is because $\lim\limits_{y\rightarrow a^-}\nu(y,z) = 1$ and $\lim\limits_{y\rightarrow a^-}\lambda(y,z) = 0$,
and so, the denominator in the formula \eqref{p*drawup} for $p^*$  converges to $0$.

\subsection{Fair premium for drawup contingency when $a=b$}\label{sec:previous}
Here we analyze the special case $a=b$ presented in Subsection \ref{a=b}.
This time we use identities \eqref{lambda1}, \eqref{nu1}, \eqref{lambda} and \eqref{nu}
to compute the contract value.

Therefore, by the expression \eqref{p*drawup} for $p^*$,
\begin{align}
p^*=&\frac{r\alpha\nu(y,z)}{1-\lambda(y,z)-\nu(y,z)}=\frac{r\alpha\left(Z^{(r)}(a-y)-Z^{(r)}(2a-y-z)\frac{W^{(r)}(a-y)}{W^{(r)}(2a-y-z)}\right)}{1-Z^{(r)}(a-y)+\frac{W^{(r)}(a-y)}{W^{(r)}(2a-y-z)}\left(Z^{(r)}(2a-y-z)-1\right)}\label{P1}
\end{align}
for $a\leq y+z$,
and
\begin{align}
p^*=&\frac{r\alpha\left(Z^{(r)}(z)-Z^{(r)}(a)\frac{W^{(r)}(a-y)}{W^{(r)}(a)}+\frac{1}{r}Z^{(r)}(a)\frac{W'^{(r)}(a)}{(W^{(r)}(a))^2}\left(Z^{(r)}(a-y)-Z^{(r)}(z)\right)\right)}{1-Z^{(r)}(z)-\left(Z^{(r)}(a)-1\right)\left(\frac{1}{r}\frac{W'^{(r)}(a)}{(W^{(r)}(a))^2}(Z^{(r)}(a-y)-Z^{(r)}(z))-\frac{W^{(r)}(a-y)}{W^{(r)}(a)}\right)}\label{P2}
\end{align}
for $a>y+z$.
Using the formulas (\ref{clW})-(\ref{clZ}) for the scale functions
for the Cram\'er--Lundberg model \eqref{CL}, one can find
the dependence of $p^*$ on the initial starting/historical
positions of the drawdown $D_0=y$ and drawup $U_0=z$.
This dependence is depicted in Figure \ref{P*cl}.
\begin{figure}[!ht]
\center{
\subfloat{\includegraphics[width=0.4\textwidth]{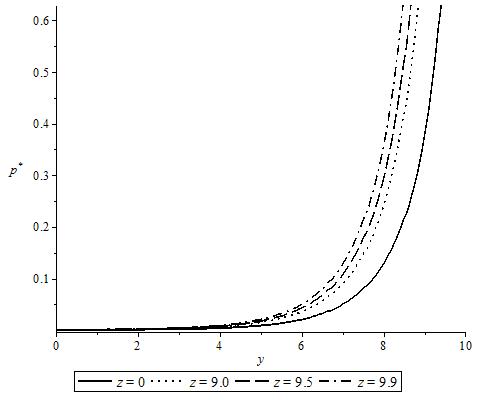}}
\quad
\subfloat{\includegraphics[width=0.4\textwidth]{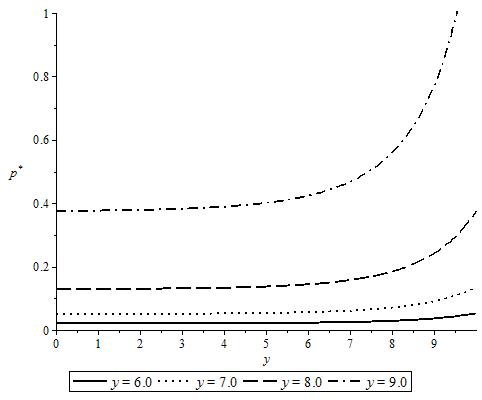}}
\caption{\footnotesize{The value $p^*$ for the drawup contingency contract
for the Cram\'er--Lundberg model for different starting positions of drawup $z$ (left)
and drawdown $y$ (right).
Parameters: $r=0.01, \mu=0.05, \beta=0.01, \rho=2.5, a=10, \alpha=100$.}}\label{P*cl}}
\end{figure}
Note that, similarly to the drawdown contract without drawup constraints,
the fair premium $p^*$ does not tend to $\infty$ as $y\uparrow a$
for the Cram\'er--Lundberg process \eqref{CL}.
This is a consequence of the fact that for the Cram\'er--Lundberg
process we have $W^{(r)}(0)>0$ and hence $\lim\limits_{y\rightarrow a^-}(\nu(y,z)+\lambda(y,z))\neq 1$,
and the denominator in \eqref{p*drawup} does not converge to $0$.

\subsection{Cancellable drawup contingency for $a>b$}
We continue the numerical analysis by adding cancellability and
by considering the insurance contract \eqref{K}.
By Theorem \ref{th2} it suffices to find $h(y,z,p,\theta^*)$
for $h$ given in \eqref{h=k} and \eqref{htheta} and
for the optimal level $\theta^*$ defined in \eqref{tetadu*}.
To calculate \eqref{htheta} for $a<b$ we use numerical integration.
The results are depicted in Figure \ref{hbm}.
\begin{figure}[!ht]
\center{
\subfloat{\includegraphics[width=0.45\textwidth]{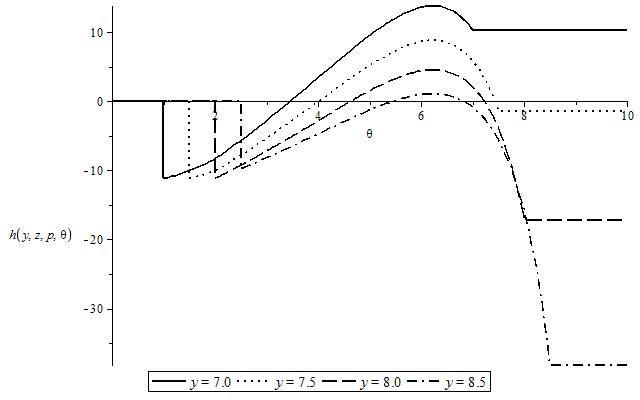}}
\quad
\subfloat{\includegraphics[width=0.45\textwidth]{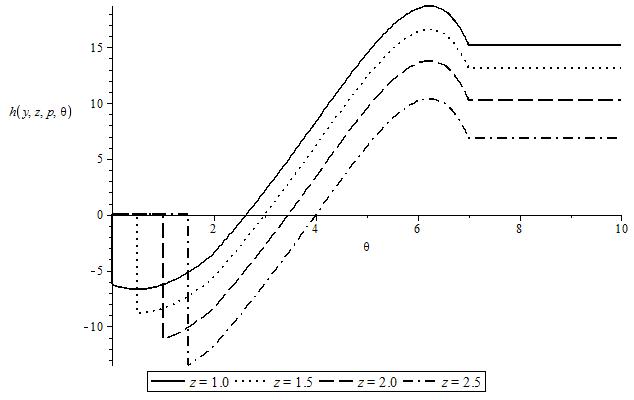}}
\quad
\subfloat{\includegraphics[width=0.45\textwidth]{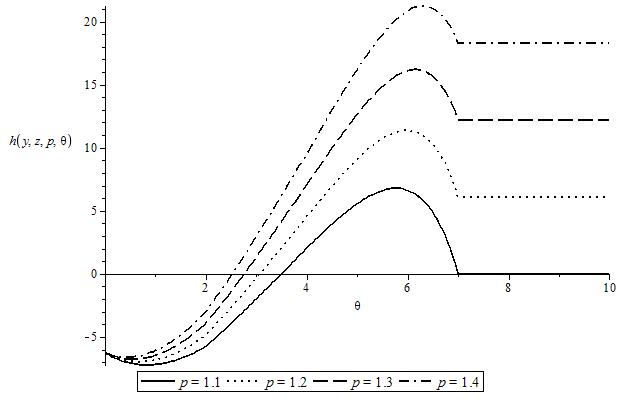}}
\caption{\footnotesize{The value $h(y,z,p,\theta)$ for the drawup contingency contract
for Brownian motion with drift in dependence on $\theta$ for different $y$ (top left),
$z$ (top right) and $p$ (bottom). Parameters: $r=0.01, \mu=0.03, \sigma=0.4, a=10, b=8, \alpha=100, c=50, p=1.35, y=7, z=2$.}}\label{hbm}}
\end{figure}

Note that by \eqref{tetadu*} the optimal drawdown stopping level $\theta^*$  maximizes the function $h$.
From Figure \ref{hbm}, it seems that also for this contract,
there is no dependence of the optimal $\theta^*$ on the initial positions $z$ and $y$ of drawup and drawdown.
However, it is clear that the optimal level of termination
is different for the contracts with and without drawup contingency.
Even if we take the same parameters, the existence of a new parameter,
the starting position of drawup, significantly changes the $\theta^*$ level.

\subsection{Cancellable drawup contingency for $a=b$}
We also analyze results for the special case $a=b$.
To obtain the value of function $h$ we can use \eqref{ha=b}.
Figure \ref{hcl} depicts the results for the Cram\'er-Lundberg model.
We can observe the same lack of dependence of the optimal stopping level $\theta^*$ on the starting position of drawdown and drawup.
\begin{figure}[!ht]
\center{
\subfloat{\includegraphics[width=0.45\textwidth]{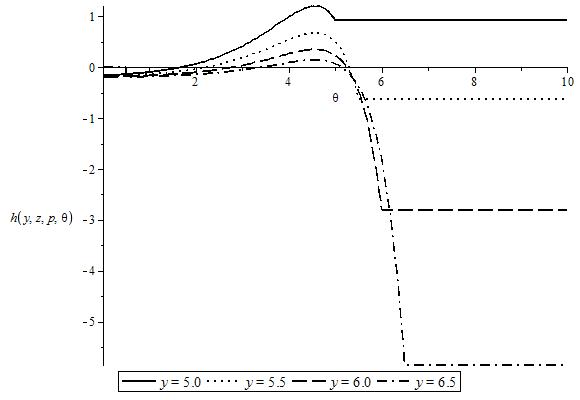}}
\quad
\subfloat{\includegraphics[width=0.45\textwidth]{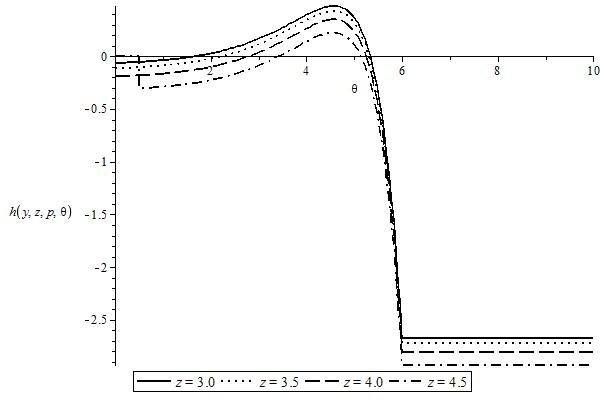}}
\quad
\subfloat{\includegraphics[width=0.45\textwidth]{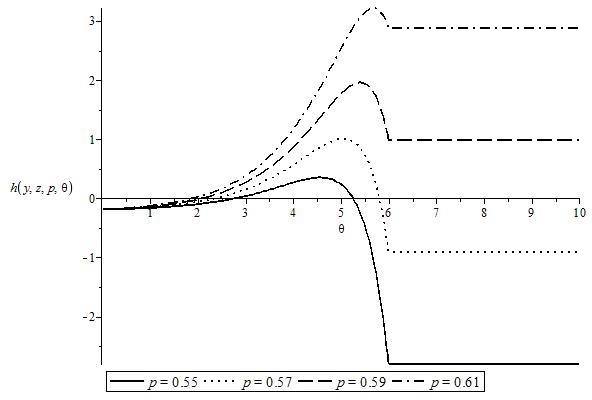}}
\caption{\footnotesize{The value $h(y,z,p,\theta)$ for the drawup contingency contract for the Cram\'er--Lundber in dependence on $\theta$ for different $y$ (top left), $z$ (top right) and $p$ (bottom). Parameters: $r=0.01, \mu=0.04, \beta=0.1, \rho=2.5, a=10, \alpha=100, c=50, p=0.55, y=6, z=4$.}}\label{hcl}}
\end{figure}

\section{Conclusions}\label{sec:con}
In this paper we analyzed a few insurance contracts against drawdown and drawup events of log-return of the asset price.
We model the asset price by a geometric spectrally negative L\'evy process.
We identified the fair premium $p^*$ and the optimal stopping rules for the contracts having a cancellation feature.
We used the theory of optimal stopping and fluctuation theory of L\'evy processes to price these type of contracts.

It is natural to consider other processes than geometric L\'evy processes $S_t=e^{X_t}$,
for example geometric jump-diffusion processes.
It is also of interest to do a more detailed numerical analysis when the jumps are of mixed-exponential type or, more generally, of phase-type.
The idea is to consider all possible shapes of the density of possible jumps in asset prices.
Moreover, one can consider a reward $\alpha$ and a fee $c$ depending on the process $X$ observed at the end of the insurance contract.
In fact, there is still a huge demand for more general insurance contracts that will serve as a policy against major drawdown events.
This will be the subject of future research.


\end{document}